\definecolor{darkgreen}{rgb}{0.0,0,0.9}
\newtcolorbox{wbox}
{
	colback  = white,
}
\newcommand*{\suppress}[1]{}
\def\thm@space@setup{%
	\thm@preskip= 10pt
	\thm@postskip=\thm@preskip 
}
\renewcommand{\paragraph}{%
	\@startsection{paragraph}{4}%
	{\z@}{5pt}{-1em}%
	{\normalfont\normalsize\bfseries}%
}
\newtheorem{theorem}{Theorem}
\newtheorem{lemma}[theorem]{Lemma}
\newtheorem{corollary}[theorem]{Corollary}
\theoremstyle{definition}
\newtheorem{definition}[theorem]{Definition}
\newtheorem{remark}[theorem]{Remark}
\newtheorem{alg}[theorem]{Algorithm}
\newtheorem{example}[theorem]{Example}
\newenvironment{fminipage}%
{\begin{Sbox}\begin{minipage}}%
		{\end{minipage}\end{Sbox}\fbox{\TheSbox}}
\newcommand{\cost}{\mbox{\rm cost}}
\newcommand{\size}{\mbox{\rm size}}
\newcommand{\val}{\mbox{\rm value}}
\title{Computational Complexity of \\
the Hylland-Zeckhauser Scheme \\
for One-Sided Matching Markets}
\author[1]{Vijay V.~Vazirani\footnote{Supported in part by NSF grant CCF-1815901.}}
\author[2]{Mihalis Yannakakis}
\affil[1]{University of California, Irvine}
\affil[2]{Columbia University}
\date{}
\begin{document}
	\maketitle

\begin{abstract}
In 1979, Hylland and Zeckhauser \cite{hylland} gave a simple and general scheme for implementing a one-sided matching market using the power of a pricing mechanism. Their method has nice properties -- it is incentive compatible in the large and produces an allocation that is Pareto optimal -- and hence it provides an attractive, off-the-shelf method for running an application involving such a market. With matching markets becoming ever more prevalant and impactful, it is imperative to finally settle the computational complexity of this scheme. 

We present the following partial resolution:
\begin{enumerate}
	\item A combinatorial, strongly polynomial time algorithm for the case of $0/1$ utilities, and more generally, when each agent's utilities come from a bivalued set. 
	\item An example that has only irrational equilibria, hence proving that this problem is not in PPAD. Furthermore, its equilibria are disconnected, hence showing that the problem does not admit a convex programming formulation.
	\item A proof of membership of the problem in the class FIXP.  
\end{enumerate}

We leave open the (difficult) question of determining if the problem is FIXP-hard. 

\end{abstract}

\pagebreak
    
\section{Introduction}
\label{sec:intro}

In a brilliant and by-now classic paper, Hylland and Zeckhauser \cite{hylland} gave a simple and general scheme for implementing a one-sided matching market using the power of a pricing mechanism\footnote{See Remark \ref{rem.BM} for a discussion of the advantages of this mechanism.}. Their method is incentive compatible in the large and produces an allocation that is Pareto optimal. It can be viewed as a marriage between fractional perfect matching and a linear Fisher market, both of which admit not only polynomial time algorithms but also combinatorial ones. These facts have enticed numerous researchers over the years to seek an efficient algorithm for the Hylland-Zeckhauser (HZ) scheme. The significance of this problem has only grown in recent years, with ever more diverse and impactful matching markets being launched into our economy, e.g., see \cite{Simons}.

Our work on resolving this problem started with an encouraging sign, when we obtained a combinatorial, strongly polynomial time algorithm for the {\em unit case}, in which all utilities are 0/1, by melding a prefect matching algorithm with the combinatorial algorithm of \cite{DPSV} for the linear Fisher market, see Section \ref{sec.unit}. This algorithm can be extended to solve a more general problem which we call the {\em bivalued utilities case}, in which each agent's  utilities can take one of only two values, though the two values can be different for different agents. However, this approach did not extend any further, as described in the next section.

While studying the unit case of two-sided markets, Bogomolnaia and Moulin \cite{Bogomolnaia2004random} called it an ``important special case of the bilateral matching problem.'' They gave a number of applications, some of which are natural applications of one-sided markets as well, e.g., housemates distributing rooms, having different features, in a house. Furthermore, they say, ``Time sharing is the simplest way to deal fairly with indivisibilities of matching markets: think of a set of workers sharing their time among a set of employers.'' It turns out that the HZ (fractional) equilibrium allocation is a superior starting point for the problem of designing a randomized time-sharing mechanism; this is discussed in Remark \ref{rem.BM} after introducing the HZ model.


\subsection{The gamut of possibilities}
\label{sec.gamut}

Before presenting the rest of our results, it is worthwhile pointing out how we got to them. The most useful solution for practical applications would of course have been a combinatorial, polynomial time algorithm for the entire scheme. At the outset, this didn't seem unlikely, especially in view of the existence of such an algorithm for the unit case. Failing this, one could have sought a rational convex program, i.e., a non-linear convex program which always has a rational solution if all parameters are rational numbers \cite{va.rational}, since then interior point methods could have been employed for solving this program efficiently. However, even this approach had many stumbling blocks.

Next we considered the generalization of the bivalued utilities case to trivalued utilities, in particular, to the case of $\{0, {1 \over 2}, 1\}$ utilities. The status of this case is discussed in Section \ref{sec.discussion}. It turns out  that underlying the polynomial time solvability of a linear Fisher market is the property of weak gross substitutability\footnote{Namely, if you increase the price of one good, the demand of another good cannot decrease.}. This property is destroyed as soon as one goes to a slightly more general utility function, namely piecewise-linear, concave and separable over goods (SPLC utilities). Evidence of intractability for the latter case was established using the class PPAD introduced in \cite{PPAD}; the problem is PPAD-complete\footnote{Independently, PPAD-hardness was also established in \cite{Chen2009spending}.} \cite{VY.plc}. It turns out that equilibrium allocations for the HZ scheme do not satisfy weak gross substitutability, e.g., see Example \ref{ex.two}. 

This led us to seek a proof of PPAD-completeness for the scheme. However, a crucial requirement before embarking on such a proof is to show that there is always a rational equilibrium if all parameters of the instance are rational numbers. However, even this is not true; we found an example that admits only irrational equilibria, see Section \ref{sec.irrational}. This example consists of four agents and goods, and hence can be viewed as belonging to the four-valued utilities case; see Remark \ref{rem.irrational} for other intriguing aspects of this example. Furthermore, this example has disconnected equilibria, hence ruling out a convex programming formulation for the HZ scheme. It was then natural to turn to the class FIXP, introduced in \cite{EY07}, to establish intractability. Our proof of membership in FIXP is presented in Section \ref{sec.membership}. We leave open the problem of determining if HZ is FIXP-hard.


\subsection{Related work}
\label{sec.related}

We are aware of only the following two computational results on the HZ scheme. Using the algebraic  cell decomposition technique of \cite{Basu1995}, \cite{DK2008} gave a polynomial time algorithm for computing an equilibrium for an Arrow-Debreu market under piecewise-linear, concave (PLC) utilities (not necessarily separable over goods) if the number of goods is fixed. One can see that their algorithm can be adapted to yield a polynomial time algorithm for computing an equilibrium for the HZ scheme if the number of goods is a fixed constant. Extending these methods, \cite{Alaei2017} gave a polynomial time algorithm for the case that the number of agents is a fixed constant.

We note that there is a paucity of results showing membership in FIXP and FIXP-hardness for equilibrium problems; for a description of the class FIXP, see Section \ref{sec.FIXP}. The quintessential complete problem for this class is multiplayer Nash equilibrium \cite{EY07}.
For the case of market equilibria, in the economics literature, there are two parallel streams of results: one assumes that an excess demand function is given and the other assumes a specific class of utility functions. We provide below all FIXP-based results we are aware of for both streams.
 
\cite{EY07} proved FIXP-completeness of Arrow-Debreu markets whose excess demand functions are algebraic. Hence this result is for the first stream and it does not establish FIXP-completeness of Arrow-Debreu markets under any specific class of utility functions. Results for the second stream include proofs of membership in FIXP for Arrow-Debreu markets under Leontief and piecewise-linear concave (PLC) utility functions in \cite{Yannakakis} and \cite{Garg2016dichotomies}, respectively. This was followed by a proof of FIXP-hardness for Arrow-Debreu markets with Leontief and PLC utilities \cite{Garg2017settling}. For the case of Arrow-Debreu markets with CES (constant elasticity of substitution) utility functions, \cite{Chen2017CES} show membership in FIXP but leave open FIXP-hardness.\footnote{Computing approximate equilibria for CES
markets is PPAD-complete \cite{Chen2017CES}.}

In recent years, several researchers have proposed Hylland-Zeckhauser-type mechanisms for a number of applications, e.g., see \cite{Budish2011combinatorial, He2018pseudo, Le2017competitive, Mclennan2018efficient}. The basic scheme has also been generalized in several different directions, including two-sided matching markets, adding quantitative constraints, and to the setting in which agents have initial endowments of goods instead of money, see  \cite{Echenique2019constrained, Echenique2019fairness}.


	

\section{The Hylland-Zeckhauser Scheme}
\label{sec.prob} 

Hylland and Zeckhauser \cite{hylland} gave a general mechanism for a one-sided matching market using the power of a pricing mechanism. Their formulation is as follows: Let $A = \{1, 2, \ldots n\}$ be a set of $n$ agents and $G = \{1, 2, \ldots, n\}$ be a set of $n$ indivisible goods. The mechanism will allocate exactly one good to each agent and will have the following two properties:
\begin{itemize}
	\item The allocation produced is Pareto optimal.
	\item The mechanism is incentive compatible.
\end{itemize}

The Hylland-Zeckhauser scheme is a marriage between linear Fisher market and fractional perfect matching. The agents will reveal to the mechanism their desires for the goods by stating their von Neumann-Mogenstern utilities. Let $u_{ij}$ represent the utility of agent $i$ for good $j$. We will use language from the study of market equilibria to describe the rest of the formulation. For this purpose, we next define the linear Fisher market model. 

A {\em linear Fisher market} consists of a set $A = \{1, 2, \ldots n\}$ of $n$ agents and a set $G = \{1, 2, \ldots, m\}$ be a set of $m$ infinitely divisible goods. By fixing the units for each good, we may assume without loss of generality that there is a unit of each good in the market. Each agent $i$ has money $m_i$ and utility $u_{ij}$ for a unit of good $j$. If $x_{ij}, \ 1 \leq j \leq m$ is the {\em bundle of goods allocated to $i$}, then the utility accrued by $i$ is $\sum_j {u_{ij} x_{ij}}$. Each good $j$ is assigned a non-negative price, $p_j$. Allocations and prices, $x$ and $p$, are said to form an {\em equilibrium} if each agent obtains a utility maximizing bundle of goods at prices $p$ and the {\em market clears}, i.e., each good is fully sold and all money of agents is fully spent.

In order to mold the one-sided market into a linear Fisher market, the HZ scheme renders goods  divisible by assuming that there is one unit of probability share of each good. An {\em allocation} to an agent is a collection of probability shares over the goods. Let $x_{ij}$ be the probability share that agent $i$ receives of good $j$. Then, $\sum_j {u_{ij} x_{ij}}$ is the {\em expected utility} accrued by agent $i$. Each good $j$ has price $p_j \geq 0$ in this market and each agent has 1 dollar with which it buys probability shares. The entire allocation must form a {\em fractional perfect  matching in the complete bipartite graph} over vertex sets $A$ and $G$ as follows: there is one unit of probability share of each good and the total probability share assigned to each agent also needs to be one unit. Subject to these constraints, each agent should buy a utility maximizing bundle of goods. By definition, under such an allocation and prices, the market clears. We will define these to be an {\em equilibrium allocation and prices}; we state this formally below after giving some preliminary definitions.

\begin{definition}
Let $x$ and $p$ denote arbitrary allocations and prices of goods. By {\em size, cost and value} of agent $i$'s bundle we mean 
\[ \sum_{j \in G} {x_{ij}}, \ \ \ \ \sum_{j \in G} {p_j x_{ij}} \ \ \ \ \mbox{and} \ \ \ \ \sum_{j \in G} {u_{ij} x_{ij}}, \] 
respectively. We will denote these by $\size(i)$, $\cost(i)$ and $\val(i)$, respectively.
\end{definition}

\begin{definition}
(Hylland and Zeckhauser \cite{hylland})
	Allocations and prices $(x, p)$ form an {\em equilibrium} for the one-sided matching market stated above if:
	\begin{enumerate}
	\item  The total probability share of each good $j$ is 1 unit, i.e., $\sum_i {x_{ij}} = 1$.
	\item  The size of each agent $i$'s allocation is 1, i.e., $\size(i) = 1$.
	\item The budget of each agent is 1 dollar and price of each good is non-negative.
	\item Subject to these constraints, each agent $i$ maximizes her expected utility, i.e., maximize $\val(i)$, subject to $\size(i) = 1$ and $\cost(i) \leq 1$.
	\end{enumerate}
	\end{definition}

Using Kakutani's fixed point theorem, the following is shown:

\begin{theorem}
\label{thm.HZ}
	[Hylland and Zeckhauser \cite{hylland}] Every instance of the one-sided market defined above admits an equilibrium. 
\end{theorem}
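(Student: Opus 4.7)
The plan is to apply Kakutani's fixed-point theorem to a standard agents--auctioneer correspondence, with the price domain scaled so that every agent always has a non-empty budget-feasible set. Let allocations live in the compact convex set $X = (\Delta^{n-1})^n$, where $\Delta^{n-1} = \{x_i \in \R_+^n : \sum_j x_{ij} = 1\}$, so that condition $2$ (size one) is built into the domain. Let prices live in the scaled simplex $P = \{p \in \R_+^n : \sum_j p_j = n\}$. This normalization is crucial: since the average price is $1$, we always have $\min_j p_j \le 1$, so the cheapest good is affordable and each agent's demand set is non-empty.

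Next, define two correspondences. For each agent $i$ and $p \in P$, let
\[ D_i(p) \;=\; \arg\max \Bigl\{ \textstyle\sum_j u_{ij} x_{ij} \;:\; x_i \in \Delta^{n-1},\ \textstyle\sum_j p_j x_{ij} \le 1 \Bigr\}, \]
which is the optimal set of a linear program over a non-empty compact polytope, hence non-empty, convex, and closed. Writing $z_j(x) = \sum_i x_{ij} - 1$ for the excess demand of good $j$, set $T(x) = \arg\max_{p \in P} \sum_j p_j z_j(x)$, again non-empty, convex, and closed. Berge's maximum theorem gives upper hemicontinuity of $D_i$ and $T$. Then the product correspondence $\Phi(x, p) = \bigl(\prod_i D_i(p)\bigr) \times T(x)$ on $X \times P$ satisfies the hypotheses of Kakutani's theorem and so admits a fixed point $(x^*, p^*)$.

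To verify that $(x^*, p^*)$ is an HZ equilibrium, condition $4$ follows immediately from $x^*_i \in D_i(p^*)$, and condition $3$ is built into the domains. For condition $1$ (market clearing), the familiar Walras-law bookkeeping applies. Summing each agent's budget constraint gives $\sum_j p^*_j \sum_i x^*_{ij} \le n$, hence $\sum_j p^*_j z_j(x^*) \le n - \sum_j p^*_j = 0$. On the other hand, since $p^*$ maximizes the linear functional $\sum_j p_j z_j(x^*)$ over the simplex $P$, we have $\sum_j p^*_j z_j(x^*) = n \cdot \max_j z_j(x^*)$. Combining these yields $\max_j z_j(x^*) \le 0$, while $\sum_j z_j(x^*) = \sum_i 1 - n = 0$ from the size constraints. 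Together these force $z_j(x^*) = 0$ for every $j$, establishing condition $1$.

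The main technical obstacle is verifying upper hemicontinuity of $D_i$ at prices where the budget correspondence fails to be lower hemicontinuous; for example at $p = (1, \ldots, 1)$ the budget constraint is tight for every $x \in \Delta^{n-1}$, and arbitrarily small perturbations of $p$ can shrink the feasible set drastically, so demand points with high utility at the uniform price may not be limits of nearby demands. The standard remedy is to introduce a small slack $\varepsilon > 0$ in the budget (or to impose a floor $p_j \ge \varepsilon$ inside $P$), apply Kakutani to the perturbed correspondence, and pass to the limit $\varepsilon \to 0$ using compactness of $X \times P$; the market-clearing argument above is robust under this limit.
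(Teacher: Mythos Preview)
The paper does not prove this theorem itself; it attributes the result to Hylland and Zeckhauser and remarks only that the proof uses Kakutani's fixed-point theorem. So there is no detailed argument in the paper to compare against, and your Kakutani-based agents--auctioneer sketch is in the spirit of what the paper invokes. The setup, the Walras-law bookkeeping, and the market-clearing deduction from the fixed point are all correct.

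You also correctly identify the real obstacle: the budget correspondence $p\mapsto\{x_i\in\Delta^{n-1}:p\cdot x_i\le 1\}$ fails to be lower hemicontinuous at $p=(1,\ldots,1)$ (for $n=2$, along $p^k=(1+\tfrac1k,\,1-\tfrac1k)$ the feasible set is $\{x:x_1\le\tfrac12\}$ for every $k$, while at the limit it is all of $\Delta^1$), so Berge does not deliver upper hemicontinuity of $D_i$ there. The gap is in your proposed remedy. After introducing budget slack $\varepsilon$ and passing to the limit, you assert that ``the market-clearing argument above is robust under this limit,'' which is true but is only half of what is needed: you must also verify that \emph{optimality} (condition~4) survives the limit, and that is exactly where the discontinuity bites. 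If the limiting price vector is $p^*=(1,\ldots,1)$, then every bundle costs exactly $1$, there is no strictly feasible point to perturb toward, and one cannot conclude $x^*_i\in D_i(p^*)$ from $x^\varepsilon_i\in D_i^\varepsilon(p^\varepsilon)$; your alternative of a price floor $p_j\ge\varepsilon$ does nothing for this case either. Closing the argument requires additional work --- either ruling out $p^*=(1,\ldots,1)$ directly, or handling that case separately via a matching argument on the maximum-utility goods, or modifying the demand correspondence (e.g.\ selecting minimum-cost optimal bundles) so that the continuity issue does not arise. Without one of these, the limit step is unjustified and the proof is incomplete.
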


Note that unlike the linear Fisher market, in which at equilibrium each agent $i$ must spend her money $m_i$ fully, in the HZ scheme, $i$ need not spend her entire dollar. Pareto optimality of the allocation follows from the well-known fact that allocations made at market equilibrium are Pareto optimal. Finally, if this ``market'' is large enough, no individual agent will be able to improve her allocation by misreporting utilities nor will she be able to manipulate prices. For this reason, the HZ scheme is Pareto optimal in the large.

As stated above, Hylland and Zeckhauser transform the original matching market problem, involving indivisible goods, to fractional assignments of probability shares of the goods, in order to render them divisible and bring to bear the vast machinery of market equilibria on the problem. However, their interest is still in solving the original problem, and to this end, they view each agent's allocation as a lottery over goods. In this viewpoint, agents accrue utility in an {\em expected sense} from their allocations. Once these lotteries are resolved in a manner faithful to the probabilities, an assignment of indivisible goods will result. As is well known, since the assignment is Pareto optimal {\em ex ante}, it will also be Pareto optimal {\em ex post}. Rather than executing the lotteries in a banal manner, Hylland and Zeckhauser give an elegant randomized procedure for rounding which uses randomness more efficiently.

\begin{remark}
	\label{rem.BM}
	In their paper studying the unit case of two-sided matching markets, Bogomolnaia and Moulin \cite{Bogomolnaia2004random} state that the preferred way of dealing with indivisibilities inherent in matching markets is to resort to time sharing using randomization. Their method builds on the Gallai-Edmonds decomposition of the underlying bipartite graph; this classifies vertices into three categories: disposable, over-demanded and perfectly matched. This is a much more coarse insight into the demand structure of vertices than that obtained via the HZ equilibrium. The latter is the output of a market mechanism in which equilibrium prices reflect the relative importance of goods in an accurate and precise manner, based on the utilities declared by buyers, and equilibrium allocations are as equitable as possible across buyers. Hence the latter yields a more fair and desirable randomized time-sharing mechanism.  
\end{remark}

\section{Properties of Optimal Allocations and Prices}
\label{sec:optimal}
	 
Let $p$ be given prices which are not necessarily equilibrium prices. An optimal bundle for agent $i$, $x_i$, is a solution to the following LP, which has two constraints, one for size and one for cost.

	\begin{align*}
		\max \quad & \sum_{j} x_{ij} u_{ij} \\
		\text{s.t.} \quad & \\ 
		\quad & \sum_j x_{ij} = 1 \\
		& \sum_j x_{ij} p_j \leq 1 \\
		\forall j \quad & x_{ij} \geq 0 
	\end{align*}

Taking $\mu_i$ and $\alpha_i$ to be the dual variables corresponding to the two constraints, we get the dual LP:
 
\begin{align*}
	\min \quad & \alpha_{i} + \mu_{i} \\
	\text{s.t.} \quad & \\ 
	\forall i,j \quad & \alpha_i p_j + \mu_i \geq u_{ij} \\
	      & \alpha_{i} \geq 0 
\end{align*}

Clearly $\mu_i$ is unconstrained. $\mu_i$ will be called the {\em offset} on $i$'s utilities. By complementary slackness, if $x_{ij}$ is positive then $\alpha_i p_j = {u_{ij} - \mu_i}$.   All goods $j$ satisfying this equality will be called {\em optimal goods for agent $i$}. The rest of the goods, called {\em suboptimal}, will satisfy $\alpha_i p_j > {u_{ij} - \mu_i}$. Obviously an optimal bundle for $i$ must contain only optimal goods. 

The parameter $\mu_i$ plays a crucial role in ensuring that $i$'s optimal bundle satisfies both size and cost constraints. If a single good is an effective way of satisfying both size and cost constraints, then $\mu_i$ plays no role and can be set to zero. However, if different goods are better from the viewpoint of size and cost, then $\mu_i$ attains the right value so they both become optimal and $i$ buys an appropriate combination. We provide an  example below to illustrate this. 

\begin{example}
\label{ex.one}
Suppose $i$ has positive utilities for only two goods, $j$ and $k$, with $u_{ij} = 10, \ u_{ik} = 2$ and their prices are $p_j = 2, \ p_{k} = 0.1$. Clearly, neither good satisfies both size and cost constraints optimally: good $j$ is better for the size constraint and $k$ is better for the cost constraint. If $i$ buys one unit of good $j$, she spends 2 dollars, thus exceeding her budget. On the other hand, she can afford to buy 10 units of $k$, giving her utility of 20; however, she has far exceeded the size constraint. It turns out that her optimal bundle consists of $9/19$ units of $j$ and $10/19$ units of $k$; the costs of these two goods being $18/19$ and $1/19$ dollars, respectively. Clearly, her size and cost constraints are both met exactly.  Her total utility from this bundle is $110/19$. It is easy to see that $\alpha_i = 80/19$ and $\mu_i = 30/19$, and for these settings of the parameters, both goods are optimal.
\end{example}

We next show that equilibrium prices are invariant under the operation of {\em scaling} the difference of prices from 1.

\begin{lemma}
\label{lem.scale}
Let $p$ be an equilibrium price vector and fix any $r > 0$. Let $p'$ be such that $\forall j \in G$, $p'_j - 1 = r(p_j -1)$. Then $p'$ is also an equilibrium price vector.
\end{lemma}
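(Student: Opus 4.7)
The plan rests on one clean algebraic observation: when an agent's bundle has size exactly $1$, the cost constraint at prices $p$ can be rewritten as a constraint on the deviation $p_j - 1$. Specifically, for any bundle $x_i$ with $\size(i) = 1$,
\[
\cost(i) - 1 \;=\; \sum_{j} p_j x_{ij} - \sum_j x_{ij} \;=\; \sum_j (p_j - 1)\, x_{ij},
\]
and at the rescaled prices $p'$,
\[
\sum_j p'_j x_{ij} - 1 \;=\; \sum_j (p'_j - 1)\, x_{ij} \;=\; r \sum_j (p_j - 1)\, x_{ij}.
\]
Since $r > 0$, the inequality $\sum_j (p_j-1) x_{ij} \le 0$ is equivalent to $r \sum_j (p_j-1) x_{ij} \le 0$, so among bundles of unit size the feasible regions at $p$ and at $p'$ coincide.

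Given this, the proof assembles itself. First I would verify the agent-side optimality condition: agent $i$'s LP at prices $p$ and at prices $p'$ has the same objective (the utilities $u_{ij}$ do not depend on prices) and, by the identity above, the same feasible set (the size constraint is unchanged and the cost constraint is equivalent on size-one bundles). Hence any bundle $x_i$ that maximizes $\val(i)$ subject to $\size(i) = 1$ and $\cost(i) \le 1$ at prices $p$ continues to be optimal at prices $p'$. Second, the market-clearing conditions (each good fully allocated, each agent's size equal to $1$) are conditions on $x$ alone and are unaffected by the change of prices. Together these show that the same allocation $x$ that equilibrated with $p$ also equilibrates with $p'$.

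The one remaining item is the non-negativity requirement on prices in condition 3 of the equilibrium definition: $p'$ might have negative components for large $r$. This is the only genuine obstacle, and I would handle it by restricting to the range of scaling factors $r$ for which $p'_j = 1 + r(p_j - 1) \ge 0$ for every $j$ (which is automatic when $p_j \ge 1$ for all $j$, and in general holds for all sufficiently small $r > 0$). Within that range the three bullet points above give the conclusion directly, with no further computation needed.
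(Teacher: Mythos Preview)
Your proof is correct and uses the same core identity as the paper: for a unit-size bundle, $\sum_j p'_j x_{ij} - 1 = r\bigl(\sum_j p_j x_{ij} - 1\bigr)$. The paper's own proof is actually terser than yours---it only verifies that the equilibrium allocation $x$ remains budget-feasible at $p'$, and defers the optimality claim to a remark after the proof. Your argument that the \emph{entire} feasible set of size-one bundles is the same at $p$ and $p'$ (because the cost constraint is equivalent, not merely implied) is the cleaner way to get optimality, and it is the argument the paper implicitly has in mind.

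Your caveat about nonnegativity of $p'$ is well taken: the lemma as stated in the paper does not restrict $r$, and for $r$ large enough a good with $p_j < 1$ would acquire a negative price, violating condition~3 of the equilibrium definition. The paper does not address this; it simply uses the lemma immediately afterward to normalize so that the minimum price is zero, which corresponds to a specific (valid) choice of $r$. So your restriction to the range of $r$ keeping all $p'_j \ge 0$ is the correct reading, and the paper's proof is silent on this point.
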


\begin{proof}
	Consider an agent $i$. Clearly, $\sum_{j \in G} {p_j x_{ij}} \leq 1$. Now,
	\[ \sum_{j \in G} {p'_j x_{ij}} \ = \ \sum_{j \in G} {(rp_j -r + 1) x_{ij}} \leq 1, \]
	where the last inequality follows by using $\sum_{j \in G} {x_{ij}} = 1$.
\end{proof}

Using Lemma \ref{lem.scale}, it is easy to see that if the allocation $x$ provides optimal bundles to all agents under prices $p$ then it also does so under $p'$. In the rest of this paper we will enforce that the minimum price of a good is zero, thereby fixing the scale. Observe that the main goal of the Hylland-Zeckhauser scheme is to yield the ``correct'' allocations to agents; the prices are simply a vehicle in the market mechanism to achieve this. Hence arbitrarily fixing the scale does not change the essential nature of the problem. Moreover, setting the minimum price to zero is standard \cite{hylland} and can lead to simplifying the equilibrium computation problem as shown in Remark \ref{rem.mu}. 

\begin{remark}
\label{rem.mu}
We remark that on the one hand, the offset $\mu_i$ is a key enabler in construing optimal bundles, on the other, it is also a main source of difficulty in computing equilibria for the HZ scheme. We identify here an interesting case in which $\mu_i = 0$ and this difficulty is mitigated. In particular, this holds for all agents in the unit case presented in Section \ref{sec.unit}.
	Suppose good $j$ is optimal for agent $i$, $u_{ij} = 0$ and $p_j = 0$, then it is easy to check that $\mu_i = 0$. If so, the optimal goods for $i$ are simply the maximum bang-per-buck goods; the latter notion is replete in market equilibrium papers, e.g., see \cite{DPSV}.
\end{remark}

Finally, we extend Example \ref{ex.one} to illustrate that optimal allocations for the Hylland-Zeckhauser model do not satisfy the weak gross substitutes condition in general.

\begin{example}
	\label{ex.two}
In Example \ref{ex.one}, let us raise the price of $k$ to 0.2 dollars. Then, the demands for $j$ and $k$ become $4/9$ and $5/9$, respectively. Notice that the demand for $j$ went down from $9/19$ to $4/9$.  One way to understand this change is as follows: Let us start with the old allocation of $10/19$ units of $k$. Clearly, the price of this allocation of $k$ goes up from $1/19$ to $2/19$, leaving only $17/19$ dollars for $j$. Therefore size of $j$ needs to be reduced to $17/38$. However, now the sum of the sizes becomes $37/38$, i.e., less than a unit. We wish to increase this to a unit while still keeping cost at a unit. The only way of doing this is to sell some of the more expensive good and use the money to buy the cheaper good. This is the reason for the  decrease in demand of $j$.
\end{example}

	\section{Strongly Polynomial Algorithm for Bivalued Utilities Case}
\label{sec.unit}

In this section, we will study the restriction of the HZ scheme to the bivalued utilities case, which is defined as follows: for each agent $i$, we are given a set $\{a_i, b_i \}$, where $0 \leq a_i < b_i$, and the utilities $u_{ij}, \ \forall j \in G$, are picked from this set. However first, using a perfect matching algorithm and the combinatorial algorithm \cite{DPSV} for linear Fisher markets, we will give a strongly polynomial time algorithm for the unit case, i.e., when all utilities $u_{ij}$ are $0/1$. Next we define the notion of equivalence of utility functions and show that the bivalued utilities case is equivalent to the unit case, thereby extending the unit case algorithm to this case.

We need to clarify that we will not use the main algorithm from \cite{DPSV}, which uses the notion of balanced flows and $\l_2$ norm to achieve polynomial running time. Instead, we will use the ``simple algorithm'' presented in Section 5 in \cite{DPSV}. Although this algorithm is not proven to be efficient, the simplified version we define below, called Simplified DPSV Algorithm, is efficient; in fact it runs in strongly polynomial time, unlike the balanced-flows-based algorithm of \cite{DPSV}. Remark \ref{rem.mu} provides an insight into what makes the unit case computationally easier.

{\bf Notation:}
We will denote by $H = (A, G, E)$ be the bipartite graph on vertex sets $A$ and $G$, and edge set $E$, with $(i, j) \in E$ iff $u_{ij} = 1$. For $A' \subseteq A$ and $G' \subseteq G$, we will denote by $H[A', G']$ the restriction of $H$ to vertex set $A' \cup G'$. If $\nu$ is a matching in $H$, $\nu \subseteq E$, and $(i, j) \in \nu$ then we will say that $\nu(i) = j$ and $\nu(j) = i$. For any subset $S \subseteq A$ ($S \subseteq G$), $N(S)$ will denote the set of neighbors, in $G$ ($A$), of vertices in $S$.

If $H$ has a perfect matching, the matter is straightforward as stated in Steps 1a and 1b; allocations and prices are clearly equilibrium. For Step 2, we need the following lemma.

\begin{lemma}
\label{lem.expand}
The following hold:

\begin{enumerate}
	\item For any set $S \subseteq A_2$, $|N(S)| \geq |S|$.
	\item For any set $S \subseteq G_1$, $|N(S) \cap A_1| \geq |S|$.
\end{enumerate}
\end{lemma}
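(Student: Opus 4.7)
The plan is to work with a maximum matching $\nu$ of $H$ and the standard alternating-paths decomposition that must underlie the partition $(A_1,A_2,G_1,G_2)$ appearing in Step~2 of the algorithm. Concretely, let $U \subseteq A$ be the agents exposed by $\nu$ (nonempty, since by hypothesis $H$ has no perfect matching), and take $A_1$ (resp.\ $G_1$) to be the vertices of $A$ (resp.\ $G$) reachable from $U$ by $\nu$-alternating paths, with $A_2 = A \setminus A_1$ and $G_2 = G \setminus G_1$. Two elementary consequences of the maximality of $\nu$ do all the work: (i) every $j \in G_1$ is matched by $\nu$, for otherwise the alternating path from $U$ to $j$ would be augmenting; and (ii) for such a $j$, $\nu(j) \in A_1$, since extending the alternating path by the matched edge $(\nu(j),j)$ makes $\nu(j)$ reachable from $U$.

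For part 1, fix $S \subseteq A_2$. Every $i \in S$ is matched by $\nu$, because all exposed agents lie in $U \subseteq A_1$. I claim $\nu(S) \subseteq G_2$: if some $i \in S$ were matched to a good $j \in G_1$, then the alternating path witnessing $j \in G_1$ extended by the matched edge $(i,j)$ would put $i$ in $A_1$, contradicting $i \in A_2$. Hence $i \mapsto \nu(i)$ is an injection from $S$ into $N(S)$, giving $|N(S)| \geq |S|$.

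For part 2, fix $S \subseteq G_1$. Facts (i) and (ii) imply that every $j \in S$ is matched and its partner $\nu(j)$ lies in $A_1$. Since $\nu(j) \in N(j) \subseteq N(S)$, the map $j \mapsto \nu(j)$ is an injection from $S$ into $N(S) \cap A_1$, yielding $|N(S) \cap A_1| \geq |S|$.

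Both parts reduce, once the decomposition is in hand, to the standard Hall-theoretic observation that matched partners of the chosen side stay on the expected side. The only point requiring care is confirming that the partition used in Step~2 coincides with the reachability decomposition above; should it be phrased instead via a minimum K\"onig vertex cover or a maximal Hall-deficient set, the two decompositions coincide on bipartite graphs and the argument is unaffected. I do not anticipate a genuine technical obstacle.
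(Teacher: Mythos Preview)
Your approach differs from the paper's. The paper never touches a matching: it argues directly from minimality of the vertex cover $G_1 \cup A_2$. If part~1 failed for some $S \subseteq A_2$, then $(G_1 \cup N(S)) \cup (A_2 \setminus S)$ would still cover every edge and would be strictly smaller; symmetrically, if part~2 failed for some $S \subseteq G_1$, then $(G_1 \setminus S) \cup A_2 \cup (N(S)\cap A_1)$ would be a smaller cover. That is the whole proof.

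Your matching-based argument is sound in substance, but the last paragraph dismisses a point that actually needs work. Step~2a of the algorithm takes an \emph{arbitrary} minimum vertex cover, and minimum vertex covers of a bipartite graph are not unique in general, so the alternating-path decomposition you build need not coincide with the partition $(A_1,A_2,G_1,G_2)$ the algorithm hands you. The right fix is not to claim coincidence but to observe that for \emph{any} minimum vertex cover $C=G_1\cup A_2$ and any maximum matching $\nu$, K\"onig's equality $|\nu|=|C|$ forces every vertex of $C$ to be $\nu$-matched and every edge of $\nu$ to meet $C$ in exactly one endpoint; consequently $\nu$ matches $G_1$ into $A_1$ and $A_2$ into $G_2$. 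That is precisely what your injections need, and with this observation your proof works for whichever minimum cover was chosen. A pleasant side effect of your route is that it already exhibits the perfect matching of $A_2$ into $G_2$ that Step~2b requires, whereas the paper derives that separately via Hall's theorem.
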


\begin{proof}
1).	If $|N(S)| < |S|$ then $(G_1 \cup N(S)) \cup (A_2 - S)$ is a smaller vertex cover for $H$, leading to a contradiction.  

2). If $|N(S) \cap A_1| < |S|$ then $(G_1 - S) \cup (A_2 \cup N(S))$ is a smaller vertex cover for $H$, leading to a contradiction.  
\end{proof}

The first part of Lemma \ref{lem.expand} together with Hall's Theorem implies that a maximum matching in $H[A_2, G_2]$ must match all agents. Therefore in Step 2a, each agent $i \in A_2$ is allocated one unit of a unique good from which it derives utility 1; clearly, this is an optimal bundle for $i$. The number of goods that will remain unmatched in $G_2$ at the end of this step is $|G_2| - |A_2|$.

\bigskip

\setcounter{figure}{1} 

\begin{figure}

	\begin{wbox}
		\begin{alg}
		\label{alg.one}
		{\bf Algorithm for the Unit Case}\\
		\\
		\begin{enumerate}
			\item If $H$ has a perfect matching, say $\nu$, then do:
			 \begin{enumerate}
			 	\item $\forall i \in A$: allocate good $\nu(i)$ to $i$.
			 	\item $\forall j \in G$: $p_j \leftarrow 0$.  \ Go to Step 3.
			 \end{enumerate}
		
			\item Else do:
			\begin{enumerate}
			\item Find a minimum vertex cover in $H$, say $G_1 \cup A_2$, where $G_1 \subset G$ and $A_2 \subset A$. Let $A_1 = A - A_2$ and $G_2 = G - G_1$. 
			\item Find a maximum matching in $H[A_2, G_2]$, say $\nu$. 
			\item $\forall i \in A_2$: allocate good $\nu(i)$ to $i$.
			 	\item $\forall j \in G_2$: $p_j \leftarrow 0$.
			\item Run the Simplified DPSV Algorithm on agents $A_1$ and goods $G_1$.
			\item $\forall i \in A_1$: Allocate unmatched goods of $G_2$ to satisfy the size constraint.
			\end{enumerate}
			\item 	Output the allocations and prices computed and Halt.

		\end{enumerate} 
		\bigskip
		\end{alg}
	\end{wbox}
\end{figure} 

Allocations are computed for agents in $A_1$ as follows. First, Step 2e uses the Simplified DPSV Algorithm, which we describe below, to compute equilibrium allocations and prices for the submarket consisting of agents in $A_1$ and goods in $G_1$. At the end of this step, the money of each agent in $A_1$ is exhausted; however, her allocation may not meet the size constraint. To achieve the latter, Step 2f allocates the unmatched zero-priced goods from $G_2$ to agents in $A_1$. Clearly, the total deficit in size is $|A_1| - |G_1|$. Since this equals $|G_2| - |A_2|$, the market clears at the end of Step 2f. As shown in Lemma \ref{lem.optimal}, each agent in $A_1$ also gets an optimal bundle of goods.

Let $p$ be the prices of goods in $G_1$ at any point in this algorithm. As a consequence of the second part of Lemma \ref{lem.expand}, the equilibrium price of each good in $G_1$ will be at least 1. The Simplified DPSV algorithm will initialize prices of goods in $G_1$ to 1 and declare all goods active. The algorithm will always raise prices of active goods uniformly\footnote{In \cite{DPSV}, prices of active goods are raised multiplicatively, which amounts to raising prices of active goods uniformly for our simplified setting.}. 

For $S \subseteq G_1$ let $p(S)$ denote the sum of the equilibrium prices of goods in $S$. A key notion from \cite{DPSV} is that of a tight set; set $S \subseteq G_1$ is said to be {\em tight} if $p(S) = |N(S)|$, the latter being the total money of agents in $A_1$ who are interested in goods in $S$. If set $S$ is tight, then the local market consisting of goods in $S$ and agents in $N(S)$ clears. To see this, one needs to use the flow-based procedure given in \cite{DPSV} to show that each agent $i \in N(S)$ can be allocated 1 dollar worth of those goods in $S$ from which it accrues unit utility. Thus equilibrium has been reached for goods in $S$. 

As the algorithm raises prices of all goods in $G_1$, at some point a set $S$ will go tight. The algorithm then {\em freezes} the prices of its goods and removes them from the active set. It then proceeds to raise the prices of currently active goods until another set goes tight, and so on, until all goods in $G_1$ are frozen.

We can now explain in what sense we need a ``simplified'' version of the DPSV algorithm. Assume that as some point, $S \subset G_1$ is frozen and goods in $G_1 - S$ are active and their prices are raised. As this happens, agents in $A_1 - N(S)$ start preferring goods in $S$ relative to those in $G_1 - S$. In the general case, at some point, an agent  $i \in (A_1 - N(S))$ will prefer a good $j \in S$ as much as her other preferred goods. At this point, edge $(i, j)$ is added to the active graph. As a result, some set $S' \subseteq S$, containing $j$, will not be tight anymore and will be unfrozen. However, in our setting, the utilities of agents in $(A_1 - N(S))$ for goods in $S$ is zero, and therefore no new edges are introduced and tight sets never become unfrozen. Hence the only events of the Simplified DPSV Algorithm are raising of prices and freezing of sets. Clearly, there will be at most $n$ freezings. One can check details in \cite{DPSV} to see that the steps executed with each freezing run in strongly polynomial time, hence making the Simplified DPSV Algorithm a strongly polynomial  time algorithm\footnote{In contrast, in the general case, the number of freezings is not known to be bounded by a polynomial in $n$, as stated in \cite{DPSV}.}.

\begin{lemma}
\label{lem.optimal}
	Agents in $A_1$ will get optimal bundles of goods.
\end{lemma}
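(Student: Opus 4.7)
The plan is to produce, for each $i \in A_1$, dual multipliers $(\alpha_i, \mu_i)$ for the LP at the start of Section~\ref{sec:optimal} so that, together with $i$'s allocation $x_i$, they satisfy primal and dual feasibility with complementary slackness; strong LP duality then yields optimality.

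To set up, I would first collect structural facts: the vertex cover $G_1 \cup A_2$ forces $u_{ij} = 0$ for all $i \in A_1$, $j \in G_2$; Step~2d sets $p_j = 0$ on $G_2$; and by the second part of Lemma~\ref{lem.expand} together with the DPSV initialization, every good in $G_1$ has $p_j \geq 1$. Let $S_1, \ldots, S_\ell$ be the tight sets in freezing order, at prices $p^{(1)} < p^{(2)} < \cdots < p^{(\ell)}$, and for each $i \in A_1$ let $k(i)$ be the smallest index with $i \in N(S_{k(i)})$. By induction on the freezing order, using the flow-decomposition argument of \cite{DPSV} together with the fact that in the $0/1$ setting no new edges appear during price raising and no tight set ever unfreezes, I would show that $i$'s $G_1$-allocation sits entirely inside $S_{k(i)}$ on goods with $u_{ij} = 1$ and $p_j = p^{(k(i))}$, and that $i$ spends its entire dollar there; Step~2f then pads $i$'s size to $1$ with $G_2$-goods that have $u_{ij} = p_j = 0$.

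Setting $\mu_i := 0$ and $\alpha_i := 1/p^{(k(i))}$, I would verify dual feasibility $\alpha_i p_j + \mu_i \geq u_{ij}$, with equality on the support of $x_i$: allocated $G_1$-goods give $p^{(k(i))}/p^{(k(i))} = 1 = u_{ij}$ and allocated $G_2$-goods give $0 = 0 = u_{ij}$. For unallocated goods, $j \in G_2$ gives $0 \geq 0$; $j \in S_{k'}$ with $u_{ij}=0$ gives $\alpha_i p^{(k')} \geq 0$; and $j \in S_{k'}$ with $u_{ij}=1$ must have $k' \geq k(i)$ by minimality of $k(i)$, so $\alpha_i p^{(k')} = p^{(k')}/p^{(k(i))} \geq 1 = u_{ij}$. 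I expect the main obstacle to be the confinement claim, since it requires unpacking the \cite{DPSV} flow argument inductively; the $0/1$ utility structure is exactly what keeps tight sets frozen in the Simplified DPSV Algorithm, which is what makes the induction go through cleanly.
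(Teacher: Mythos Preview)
Your proposal is correct and follows essentially the same route as the paper's proof: both identify that agent $i\in A_1$ receives its $G_1$-allocation entirely from the first tight set $S_{k(i)}$ in the freezing order for which $i\in N(S_{k(i)})$, spends its whole dollar there on price-$p^{(k(i))}$ utility-$1$ goods, and is then padded with zero-priced utility-$0$ goods from $G_2$. The only stylistic difference is that you certify optimality via explicit dual multipliers $(\alpha_i,\mu_i)=(1/p^{(k(i))},0)$ and complementary slackness, whereas the paper argues directly that $i$ is buying the cheapest utility-$1$ goods available to it (since $i$ has no edges into $S_1,\ldots,S_{k(i)-1}$) and hence the bundle is utility-maximizing subject to the size and cost constraints.
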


\begin{proof}
Assume that the algorithm freezes $k$ sets, $S_1, \ldots S_k$, in that order; the union of these sets being $G_1$. Let $p_1, p_2, \ldots p_k$ be the prices of goods in these sets, respectively. Clearly, successive freezings will be at higher and higher prices and therefore, $1 \leq p_1 < p_2 < \ldots < p_k$, and for $1 \leq j \leq k, p_j = |N(S_j)|/|S_j|$. If $i \in N(S_j)$, the  algorithm will allocate $1/p_j$ amount of goods to $i$ from $S_j$, costing 1 dollar.

By definition of neighborhood of sets, if $i \in N(S_j)$, then $i$ cannot have edges to $S_1, \ldots S_{j-1}$ and can have edges to $S_{j+1}, \ldots , S_k$. Therefore, the cheapest goods from which it accrues unit utility are in $S_j$, the set from which she gets 1 dollar worth of allocation. The rest of the allocation of $i$, in order to meet $i$'s size constraint, will be from $G_2$, which are zero-priced and from which $i$ gets zero utility. Clearly, $i$ gets a utility maximizing bundle satisfying both size and cost constraints.
\end{proof}

Since all steps of the algorithm, namely finding a maximum matching, a minimum vertex cover and running the Simplified DPSV Algorithm, can be executed in strongly polynomial time, we get:

\begin{lemma}
	\label{lem.unit}
	 Algorithm \ref{alg.one} finds equilibrium prices and allocations for the unit case of the Hylland-Zeckhauser scheme. It runs in strongly polynomial time.
	 \end{lemma}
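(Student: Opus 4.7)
The plan is to verify the two assertions — equilibrium and strongly polynomial running time — separately, relying on the scaffolding already provided by Lemmas \ref{lem.expand} and \ref{lem.optimal}. For the easy branch (Step 1), correctness is immediate: every agent receives one unit of a good of utility $1$ at price $0$, so both size and cost constraints hold and no size-$1$ bundle can exceed value $1$ when utilities are in $\{0,1\}$.

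For the main branch I would first establish feasibility and optimality of the allocation. By Lemma \ref{lem.expand}(1) together with Hall's theorem, the matching $\nu$ in Step 2b saturates $A_2$, so each $i \in A_2$ is assigned exactly one free good of utility $1$; this is clearly a utility-maximizing bundle meeting both constraints. Lemma \ref{lem.optimal} handles optimality for $A_1$. To see that the market clears overall, I would use the identity $|A_1| - |G_1| = |G_2| - |A_2|$, which follows from $|A| = |G| = n$; thus the size-deficit of agents in $A_1$ at the end of Step 2e is exactly the surplus of unmatched goods in $G_2$ available in Step 2f, matching total demand and supply on both sides. Inside $G_1$, clearance when a tight set $S$ is frozen follows from the flow-based argument of \cite{DPSV}: since $p(S) = |N(S)|$, a max-flow in $H[N(S), S]$ can route each agent in $N(S)$ exactly one dollar's worth of goods in $S$. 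Budget feasibility for $A_1$ is then trivial: each such agent spent one dollar in $G_1$ and received the remaining size from zero-priced goods in $G_2$.

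For the complexity claim I would verify strong polynomiality of each ingredient. Maximum bipartite matching and minimum vertex cover via K\"onig's theorem are classical. The Simplified DPSV Algorithm is strongly polynomial in the $0/1$ setting because no tight set ever gets unfrozen: agents in $A_1 \setminus N(S)$ have utility $0$ for every good in a frozen $S$, so raising prices of the active goods uniformly never creates a new attractive edge into $S$. Consequently the algorithm witnesses at most $n$ freezing events, and each event is implementable in strongly polynomial time (detecting tightness and fixing the intra-$S$ allocation are both max-flow computations).

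The only step that is not bookkeeping is the intra-tight-set clearance: one must explicitly invoke the max-flow construction of \cite{DPSV} to certify that the uniform-price, one-dollar-per-agent prescription really is realizable as a fractional matching in $H[N(S), S]$. Everything else reduces to the already-established lemmas and cardinality counting.
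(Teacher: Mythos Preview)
Your proposal is correct and follows essentially the same approach as the paper: the paper does not give a standalone proof of this lemma but rather states it as a summary of the surrounding discussion, and your outline recapitulates exactly that discussion---Hall's theorem via Lemma~\ref{lem.expand}(1) for $A_2$, Lemma~\ref{lem.optimal} for $A_1$, the cardinality identity $|A_1|-|G_1|=|G_2|-|A_2|$ for market clearing, and the no-unfreezing argument bounding the Simplified DPSV Algorithm to at most $n$ events. The only point you make slightly more explicit than the paper is the invocation of the max-flow construction inside each tight set, which the paper simply attributes to \cite{DPSV}.
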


\begin{definition}
	\label{def.eq}
Let $I$ be an instance of the HZ scheme and let the utility function of agent $i$ be $u_i = \{u_{i1}, u_{i12}, \ldots, u_{in} \}$. Then $u_i' = \{u_{i1}', u_{i12}', \ldots, u_{in}' \}$ is {\em equivalent} to $u_i$ if there are two numbers $s > 0$ and $h \geq 0$ such that for $1 \leq j \leq n, \ u_{ij}' = s \cdot u_{ij} + h$. The numbers $s$ and $h$ will be called the {\em scaling factor} and {\em shift}, respectively.
\end{definition}

\begin{lemma}
	\label{lem.eq}
	Let $I$ be an instance of the HZ scheme and let the utility function of agent $i$ be $u_i$. Let $u_i'$ be equivalent to $u_i$ and let $I'$ be the instance obtained by replacing $u_i$ by $u_i'$ in $I$. Then $x$ and $p$ are equilibrium allocation and prices for $I$ if and only if they are also for $I'$.
\end{lemma}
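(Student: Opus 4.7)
The plan is to observe that the only way utilities enter the definition of equilibrium is through the objective of each agent's individual optimization problem (the first three conditions in the definition of equilibrium depend only on $x$ and $p$). So it suffices to show that, for agent $i$ and for any fixed price vector $p$, the set of feasible allocations $x_i$ maximizing $\sum_j u_{ij}' x_{ij}$ subject to $\sum_j x_{ij} = 1$, $\sum_j p_j x_{ij} \le 1$, $x_{ij} \ge 0$ coincides with the set of maximizers for the old utility vector $u_i$.

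To do this, I would simply rewrite the new objective using the size constraint. For any allocation $x_i$ satisfying $\sum_j x_{ij} = 1$,
\[
\sum_{j} u_{ij}' x_{ij} \;=\; \sum_j (s\,u_{ij} + h) x_{ij} \;=\; s \sum_j u_{ij} x_{ij} \,+\, h \sum_j x_{ij} \;=\; s \sum_j u_{ij} x_{ij} + h.
\]
Since the size constraint $\sum_j x_{ij} = 1$ is part of the feasible region, the new objective is, on the feasible set, a positive affine transformation (with multiplier $s > 0$ and constant $h$) of the old objective. Hence $x_i$ maximizes the new objective over the feasible set iff it maximizes the old objective, and the set of optimal bundles for $i$ at prices $p$ is identical in $I$ and $I'$.

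Combining this with the fact that agents other than $i$ have unchanged utilities and all market-clearing/budget/price conditions are utility-independent, any equilibrium $(x,p)$ of $I$ is an equilibrium of $I'$ and vice versa. There is no real obstacle here; the only thing one must be careful about is using the size equality $\sum_j x_{ij}=1$ to absorb the additive shift $h$ into a constant, which is why the equivalence would fail in a market model without a size constraint (e.g., the plain linear Fisher market).
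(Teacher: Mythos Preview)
Your proposal is correct and follows essentially the same approach as the paper: both arguments use the size constraint $\sum_j x_{ij}=1$ to rewrite the new objective as $s\cdot\sum_j u_{ij}x_{ij}+h$, a positive affine transformation of the old objective on the feasible set, and then conclude that the optimal bundles for agent $i$ coincide under $I$ and $I'$. The paper's proof is slightly terser but the core idea is identical.
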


\begin{proof}
Let $s$ and $h$ be the scaling factor and shift that transform $u_i$ to $u_i'$.
	By the statement of the lemma, $x_i = \{x_{i1}, \ldots , x_{in} \}$ is an optimal bundle for $i$ at prices $p$ and hence is a solution to the optimal bundle LP given in Section \ref{sec:optimal}. The objective function of this LP is
	\[ \sum_{j = 1}^n {u_{ij} x_{ij}} . \]
	Next observe that the objective function of the corresponding LP for $i$ under instance $I'$ is 
		\[ \sum_{j = 1}^n {u_{ij}' x_{ij}}  \ = \ \sum_{j = 1}^n {(s \cdot u_{ij} + h) x_{ij}} \ = \ h + s \cdot \sum_{j = 1}^n {u_{ij} x_{ij}} , \]
		where the last equality follows from the fact that $\sum_{j = 1}^n { x_{ij}} = 1$. Therefore, the objective function of the second LP is obtained from the first by scaling and shifting. Furthermore, since the constraints of the two LPs are identical, the optimal solutions of the two LPs are the same. The lemma follows.
\end{proof}

Next, let $u_i$ be bivalued with the two values being $0 \leq a < b$. Obtain $u_i'$ from $u_i$ by replacing $a$ by $0$ and $b$ by $1$. Then, $u_i'$ is equivalent to $u_i$, with the shift and scaling being $a$ and $b-a$, respectively. Therefore the bivalued instance can be transformed to a unit instance, with both having the same equilibria. Now using Lemma \ref{lem.unit} we get: 

\begin{theorem}
	\label{thm.bivalued}
	There is a strongly polynomial time algorithm for the bivalued utilities case of the Hylland-Zeckhauser scheme.
\end{theorem}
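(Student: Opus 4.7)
The plan is to reduce the bivalued instance to a unit instance (where all utilities are $0/1$) one agent at a time, using Lemma \ref{lem.eq}, and then invoke Algorithm \ref{alg.one} via Lemma \ref{lem.unit}. Concretely, given an instance $I$ in which agent $i$'s utilities take values in $\{a_i, b_i\}$ with $0 \leq a_i < b_i$, I would define the scaling factor $s_i = 1/(b_i - a_i)$ and shift $h_i = -a_i/(b_i - a_i)$, and let $u_i'$ be the utility function with $u_{ij}' = s_i u_{ij} + h_i$. Since $b_i > a_i$, we have $s_i > 0$, and by construction $u_{ij}' \in \{0,1\}$. Thus $u_i'$ is equivalent to $u_i$ in the sense of Definition \ref{def.eq}.

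Next, I would apply Lemma \ref{lem.eq} successively, once for each agent, to replace the utility functions one by one. After $n$ applications we obtain an instance $I'$ whose utility matrix is $0/1$, i.e.\ a unit instance. Each application preserves the set of equilibrium allocation/price pairs, so $(x,p)$ is an equilibrium of $I$ iff it is an equilibrium of $I'$. This is the only step that needs a little care: Lemma \ref{lem.eq} is stated as a transformation of a single agent's utility, so one must observe that repeated application is legitimate because each intermediate instance is still a valid HZ instance, and equivalence of equilibria is transitive.

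Having reduced to a unit instance $I'$, I would run Algorithm \ref{alg.one} on $I'$ to obtain equilibrium allocations $x$ and prices $p$ in strongly polynomial time, as guaranteed by Lemma \ref{lem.unit}. By the equivalence, $(x,p)$ is also an equilibrium for the original bivalued instance $I$. Finally, I would verify that the overall running time remains strongly polynomial: the reduction itself requires only $O(n^2)$ arithmetic operations (computing the $s_i, h_i$ and applying them to the entries of the utility matrix), none of which depend on the bit complexity of the input, and Algorithm \ref{alg.one} itself is strongly polynomial, so the composition is strongly polynomial.

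I do not expect a serious obstacle here; the main subtlety is simply that the $a_i, b_i$ are agent-specific, so the single-agent form of Lemma \ref{lem.eq} is exactly what is needed (a global linear transformation of all utilities would not suffice). Beyond that, the argument is a direct chaining of the two lemmas already established.
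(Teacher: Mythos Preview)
Your approach is essentially the same as the paper's: reduce each agent's bivalued utilities to $0/1$ via an affine transformation, invoke Lemma~\ref{lem.eq} agent by agent, and then apply Lemma~\ref{lem.unit}. One small technicality: Definition~\ref{def.eq} requires the shift $h \geq 0$, but your $h_i = -a_i/(b_i - a_i)$ is negative whenever $a_i > 0$; the clean fix is to observe instead that $u_i$ is equivalent to $u_i'$ (with scaling $b_i - a_i > 0$ and shift $a_i \geq 0$), and since Lemma~\ref{lem.eq} is an ``if and only if'' this suffices --- which is exactly how the paper phrases it.
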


\section{Characterizing Optimal Bundles}
\label{sec:optimal}
	 
In this section we give a characterization of optimal bundles for an agent at given prices $p$ which are not necessarily equilibrium prices. This characterization will be used critically in Section \ref{sec.membership} and to some extent in Section \ref{sec.irrational}.

{\bf Notation:} For each agent $i$, let $G_i^* \subseteq G$ denote the set of goods from which $i$ derives maximum utility, i.e., $G_i^* = \arg \max_{j \in G} \{u_{ij} \}$. With respect to an allocation $x$, let $B_i = \{ j \in G \ | \ x_{ij} > 0 \}$, i.e., the set of goods in $i$'s bundle.

We identify the following four types of optimal bundles.

{\bf Type A bundles:} $\alpha_i = 0$ and $\cost(i) < 1$.

By complementary slackness, optimal goods will satisfy $u_{ij} = \mu_i$ and suboptimal goods will satisfy $u_{ij} < \mu_i$. Hence the set of optimal goods is $G_i^*$ and $B_i \subseteq G_i^*$. Note that the prices of goods in $B_i$ can be arbitrary, as long as $\cost(i) < 1$.

{\bf Type B bundles:} $\alpha_i = 0$ and $\cost(i) = 1$.  

The only difference from the previous type is that $\cost(i)$ is exactly 1. The reason for distinguishing the two types will become clear in Section \ref{sec.membership}.

{\bf Type C bundles:}  $\alpha_i > 0$ and all optimal goods for $i$ have the same utility.

Recall that good $j$ is optimal for $i$ if\footnote{Note that under this case, optimal goods are not necessarily maximum utility goods; the latter may be suboptimal because their prices are too high.} $\alpha_i p_j = {u_{ij} - \mu_i}$. Suppose goods $j$ and $k$ are both optimal. Then $u_{ij} = u_{ik}$ and $\alpha_i p_j = {u_{ij} - \mu_i} = {u_{ik} - \mu_i} = \alpha_i p_k$, i.e., $p_j = p_k$. Since $\alpha_i > 0$, by complementary slackness, $\cost(i) = 1$. Further, since $\size(i) = 1$, we get that each optimal good has price 1.

{\bf Type D bundles:} $\alpha_i > 0$ and not all optimal goods for $i$ have the same utility.

Suppose goods $j$ and $k$ are both optimal and $u_{ij} \neq u_{ik}$. Then $\alpha_i p_j = {u_{ij} - \mu_i} \neq {u_{ik} - \mu_i} = \alpha_i p_k$, i.e., $p_j \neq p_k$. Therefore optimal goods have at least two different prices.  Since $\alpha_i > 0$, by complementary slackness, $\cost(i) = 1$. Further, since $\size(i) = 1$, there must be an optimal good with price more than 1 and an optimal good with price less than 1. Finally, if good $z$ is suboptimal for $i$, then $\alpha_i p_z < u_{iz} - \mu_i$.


\section{An Irrational Example}
\label{sec.irrational}

Our example has 4 agents $A_1, \ldots, A_4$ and 4 goods $g_1, \ldots, g_4$\footnote{It is easy to see, by analyzing relations in the bipartite graph on agents and goods with edges corresponding to non-zero allocations, that any instance with 3 agents and 3 goods and rational utilities has a rational equilibrium.}.
The agents' utilities for the goods are given in Table \ref{tab:ctexample}, with rows correspond to  agents and columns to goods. The quantity $-\infty$ stands for 
a sufficiently large negative number $-M$ ($-M \leq -20$ suffices).
If desired, we can shift up the utilities so that they are nonnegative.

\begin{table}[h!]
  \begin{center}
    \caption{Agents' utilities.}
    \label{tab:ctexample}
    \begin{tabular}{c | c c c c} 
       & $g_1$ & $g_2$ & $g_3$ & $g_4$\\
     \hline
      $A_1$ & 10 & 20 & $- \infty$ & 40 \\
      $A_2$ & 10 & 15 & $- \infty$ & 40\\
      $A_3$ & 10 & $-\infty$ & 30 & $- \infty$ \\
      $A_4$ & $-\infty$ & 20 & 30 & $- \infty$\\
    \end{tabular}
  \end{center}
\end{table}

Thus, agents $A_1$ and $A_2$ like, to varying degrees, three goods only,
$g_1, g_2, g_4$, while agents $A_3$ and $A_4$ like two goods each, $\{ g_1, g_3\}$ and  $\{ g_2, g_3\}$, respectively. The precise values of the utilities are not important;
the important aspects are: which goods each agent likes, the order between them, and the
ratios $\frac{u_{14} - u_{12}}{u_{12} - u_{11}}$
and $\frac{u_{24} - u_{22}}{u_{22} - u_{21}}$. Notice that the latter are unequal.

We will show that this example has two equilibrium solutions.
In both solutions, agents $A_3$ and $A_4$ buy the two goods that they like.
In one solution, $A_1$ buys only the goods $g_2$ and $g_4$, and $A_2$
buys all three goods $g_1, g_2, g_4$,
while in the other solution, $A_1$ buys all three goods $g_1, g_2, g_4$
and $A_2$ buys $g_1$ and $g_4$ only.
In both solutions, the minimum price of a good is 0 and the
price of the other three goods is irrational.

\begin{lemma}
\label{lem.L1}
Equilibrium prices satisfy:
\[ 0=p_1 < p_2 < 1 \ \ \mbox{and} \ \ p_3, \ p_4 >1 . \]
The equilibrium bundle of each agent is of Type D and contains goods having positive utilities only.
\end{lemma}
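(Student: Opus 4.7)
My plan is to work outward from the heavily constrained choices of $A_3$ and $A_4$ to pin down the full price vector. First, since $-M$ is sufficiently negative, no agent's equilibrium bundle can contain a good yielding utility $-M$: trading an infinitesimal amount of such a good for a positive-utility good (which the agent can always afford, since the minimum-price good costs $0$) strictly raises utility without violating size or cost. This restricts the supports to $B_{A_3}\subseteq\{g_1,g_3\}$, $B_{A_4}\subseteq\{g_2,g_3\}$, and $B_{A_1},B_{A_2}\subseteq\{g_1,g_2,g_4\}$. Combined with market clearance and the individual size constraints, these inclusions yield the three flow identities $x_{A_3,g_3}+x_{A_4,g_3}=1$, $x_{A_3,g_1}+x_{A_4,g_2}=1$, and $x_{A_1,g_4}+x_{A_2,g_4}=1$.

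Next I will show $B_{A_3}=\{g_1,g_3\}$ and $B_{A_4}=\{g_2,g_3\}$ by eliminating the singleton options. Suppose $B_{A_3}=\{g_3\}$: then $x_{A_3,g_3}=1$, forcing $B_{A_4}=\{g_2\}$ with $x_{A_4,g_2}=1$. Now $A_3=\{g_3\}$ gives $\cost(A_3)=p_3\le 1$, while the optimal singleton $\{g_2\}$ for $A_4$ is either Type~A/B --- which requires $\mu_{A_4}=u_{A_4,g_2}=20$ and hence $0=\alpha p_3>30-20$, a contradiction --- or Type~C, which requires $p_2=1$ and $\alpha(p_3-1)>10$, forcing $p_3>1$. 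Either way contradicts $p_3\le 1$. The case $B_{A_3}=\{g_1\}$ is symmetric. Hence $B_{A_3}$ and $B_{A_4}$ each contain two goods of distinct positive utilities, placing both bundles in Type~D by the Section~\ref{sec:optimal} classification (Types A,B need all optimal goods at a common utility $\mu_i$; Type~C at a common price~$1$). Type~D then forces $\cost(A_3)=\cost(A_4)=1$, and the monotonicity $\alpha_ip_j=u_{ij}-\mu_i$ gives $p_1<p_3$ and $p_2<p_3$ with one price strictly below~$1$ and one strictly above~$1$ in each bundle, so $p_1<1<p_3$ and $p_2<1<p_3$.

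For $p_4>1$: since $g_4$ is split between $A_1$ and $A_2$, one of them --- say $A_1$ --- holds $g_4$ in its bundle. If $B_{A_1}=\{g_4\}$ then $x_{A_2,g_4}=0$ and $B_{A_2}\subseteq\{g_1,g_2\}$; the three subcases are ruled out as follows. $B_{A_2}=\{g_1,g_2\}$ would be Type~D (distinct positive utilities), requiring a price strictly above~$1$, contradicting $p_1,p_2<1$. $B_{A_2}=\{g_1\}$ forces $x_{A_2,g_1}=1$, hence (by supply of $g_1$) $x_{A_3,g_1}=0$, contradicting the Type~D formula $x_{A_3,g_1}=(p_3-1)/(p_3-p_1)>0$. $B_{A_2}=\{g_2\}$ similarly contradicts $x_{A_4,g_2}=(p_3-1)/(p_3-p_2)>0$. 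So $B_{A_1}$ must contain $g_4$ alongside $g_1$ or $g_2$, placing it in Type~D and forcing $p_4>1$; the analogous argument shows $A_2$'s bundle is Type~D as well.

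Finally, to separate $p_1$ from $p_2$, note that with $p_3,p_4>1$ the convention $\min_j p_j=0$ places the minimum on $g_1$ or $g_2$. If $p_2=0$, then for $i\in\{1,2\}$ good $g_2$ strictly dominates $g_1$ in $A_i$'s LP (strictly higher utility at no greater cost per unit size), so swapping any positive amount of $g_1$ for $g_2$ strictly improves the bundle; this forces $x_{A_1,g_1}=x_{A_2,g_1}=0$, and market clearance then gives $x_{A_3,g_1}=1$, contradicting $x_{A_3,g_1}=(p_3-1)/(p_3-p_1)<1$. Hence $p_2>0$, the minimum must sit on $g_1$, and the full chain $0=p_1<p_2<1<p_3,p_4$ holds. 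The main obstacle I expect is the singleton-elimination step, whose contradictions are global, interlocking the bundle type of one agent with the cost or supply constraints of another; combining the Section~\ref{sec:optimal} classification with the closed-form Type~D allocations nevertheless keeps the case analysis tractable.
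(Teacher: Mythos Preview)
Your first step --- that no equilibrium bundle contains a good of utility $-M$ --- is not justified by the trade argument you give, and this is a genuine gap on which the entire remainder of your proof rests. The claim ``trading an infinitesimal amount of such a good for a positive-utility good (which the agent can always afford, since the minimum-price good costs $0$)'' fails precisely when the $-M$ good \emph{is} the zero-price good and the agent is at her budget limit. Take $A_4$: her only positive-utility goods are $g_2,g_3$. If $p_1=0$ and $A_4$'s bundle is, say, $\{g_1,g_3\}$ with $\cost(A_4)=1$, then swapping any amount of $g_1$ for $g_2$ or $g_3$ strictly increases cost and is infeasible; there is no free positive-utility good for $A_4$ to move into, and at this point you have established no price relations that would enable a cost-neutral three-way trade. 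The paper confronts exactly this case, but only \emph{after} proving $0=p_1<p_2<3/4$ and $p_3>1$: from $g_1,g_3$ both optimal for $A_4$ one gets $\mu_4=-M$ and $\alpha_4 p_3=M+30$, whence the dual constraint for $g_2$ forces $p_2\ge p_3\cdot\frac{M+20}{M+30}>\frac{M+20}{M+30}\ge 4/5$, contradicting $p_2<3/4$. You have no such bound available when you invoke step~1.

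The gap is structural rather than cosmetic: your flow identities, the singleton eliminations for $A_3,A_4$, and the Type~D conclusions all presuppose the restricted supports $B_{A_3}\subseteq\{g_1,g_3\}$, $B_{A_4}\subseteq\{g_2,g_3\}$. One can rule out $g_3\in B_{A_1}\cup B_{A_2}$ cheaply (if $A_1$ bought $g_3$ then the dual forces $p_3<p_1,p_2,p_4$, so $p_3=0$, and then $A_3,A_4$ each demand a full unit of $g_3$), but excluding $g_1\in B_{A_4}$ and $g_2\in B_{A_3}$ needs real work that you have not done. The rest of your argument (the singleton elimination, the Type~D deductions, and the $p_2>0$ step) is correct and is a pleasantly structural alternative to the paper's more direct over-demand reasoning; but it cannot stand until step~1 is repaired, and repairing it essentially requires deriving some of the price inequalities \emph{first}, as the paper does.
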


\begin{proof}
Suppose $p_3 \leq 1$. Then agents $A_3$ and $A_4$ will
demand 1 unit each of good $g_3$, leading to a contradiction.
Similarly, if $p_4 \leq 1$ then $A_1$ and $A_2$ will demand
1 unit each of $g_4$. Therefore, $p_3, p_4 >1$.
Since the maximum utility goods of every agent have price $>1$,
all agents spend exactly 1.
Therefore, the sum of the prices of the goods is 4.

Suppose $p_2 =0 \leq p_1$.
Then $A_1, A_2, A_4$ do not buy $g_1$, since they prefer $g_2$ and it is weakly cheaper than $g_1$. Therefore $A_3$ must buy the entire unit of $g_1$.
Clearly $A_1, A_2$ do not buy $g_3$, since they prefer $g_2$. Therefore,
the only agent who buys $g_3$ is $A_4$; however, she cannot
afford the entire unit of $g_3$ since $p_3>1$, contradicting market clearing. Therefore
$p_2 >0$ and hence the 0-priced good is $g_1$ and
$p_1 =0 < p_2$. Furthermore, $p_2 + p_3+p_4 =4$.

Next suppose $p_2 \geq 3/4$. Then $p_4 =4 -(p_2+p_3) < 9/4$. For both agents $A_1$
and $A_2$, a combination of $g_1$ and $g_4$ in proportion 2:1 has
a price less than $3/4$ for one unit and utility 20, and is therefore preferable to $g_2$.
Hence, $A_1, A_2$ will not buy any $g_2$, and since
$A_3$ does not buy any $g_2$ either, since she prefers $g_1$,
it follows that $A_4$ must buy the entire unit of $g_2$.
This is possible only if $p_2=1$ and $A_4$
buys nothing else; in particular, she does not buy any $g_3$. 
Clearly, $A_1, A_2$ do not buy any $g_3$ since they prefer $g_1$.
Therefore the entire unit of $g_3$ must be bought by $A_3$,
which is impossible because $p_3 >1$. Hence $p_2 < 3/4$.
These facts together with $p_1=0 < p_2 < 1 < p_3,p_4$ imply that the agents' bundles are not Type B or C. Therefore they are all of Type D.

Finally we prove that none of the agents will buy an undesirable good. For $A_1, A_2, A_3$,
such a good is dominated by another lower-priced good. Since $p_4 >1$, $A_4$ does not buy $g_4$.
Suppose agent $A_4$ buys good $g_1$. Since she spends 1 dollar, she must also buy $g_3$.
Therefore we have:
$\alpha_4 p_1 + \mu_4 = u_{41} =-M$, where $M \geq 2)$. Therefore $ \mu_4 = -M$. Also
$\alpha_4 p_3 + \mu_4 = u_{43} = 30$, therefore $\alpha_4 p_3  = M+30$.
Furthermore,  $\alpha_4 p_2 + \mu_4 \geq u_{42}= 20$,
hence $p_2 \geq (M+20)/(M+30)$, which contradicts $p_2 < 3/4$.
Therefore, no agent buys any undesirable good.
\end{proof}

\medskip

\begin{lemma}
\label{lem.L2}
One of the agents $A_1, A_2$ buys all three desirable goods.
If $A_1$ buys $g_1, g_2, g_4$, then
$A_2$ buys $g_1, g_4$ only.
If $A_2$ buys $g_1, g_2, g_4$, then
$A_1$ buys $g_2, g_4$ only.
\end{lemma}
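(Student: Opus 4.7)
The plan is a case analysis on the ratio $p_4/p_2$, combining the Type~D characterization from Section~\ref{sec:optimal} with Lemma~\ref{lem.L1} and the market-clearing equations.

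First I would apply the Type~D optimality conditions to each $i\in\{1,2\}$. Since $p_1=0$, whenever $g_1$ is optimal we have $\mu_i=u_{i1}$, and optimality of all three desirable goods $\{g_1,g_2,g_4\}$ for $A_i$ then forces $p_4/p_2=(u_{i4}-u_{i1})/(u_{i2}-u_{i1})$. Plugging in the utilities in Table~\ref{tab:ctexample} gives $r_1:=3$ and $r_2:=6$. A short check using the suboptimality inequalities shows that the only feasible bundles $B_i\subseteq\{g_1,g_2,g_4\}$ satisfying $\size(i)=\cost(i)=1$ are $\{g_1,g_4\}$, $\{g_2,g_4\}$, and $\{g_1,g_2,g_4\}$: a singleton yields $\cost(i)\ne 1$ while $\{g_1,g_2\}$ would force $x_{i2}=1/p_2>1$. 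These three options correspond respectively to $p_4<r_i p_2$, $p_4>r_i p_2$, and $p_4=r_i p_2$; since $r_1\ne r_2$, at most one of $A_1,A_2$ can have all three goods optimal.

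I would then partition on $p_4/p_2$. For $p_4/p_2<3$ both agents have $B_i=\{g_1,g_4\}$, so only $A_4$ buys $g_2$; market clearing of $g_2$ forces $x_{42}=1$, hence $p_2=\cost(A_4)=1$, contradicting $p_2<1$. Symmetrically, for $p_4/p_2>6$ both have $B_i=\{g_2,g_4\}$ and only $A_3$ buys $g_1$, giving $x_{31}=1$ and $\cost(A_3)=0$, contradicting the Type~D bundle of $A_3$. For the interior range $3<p_4/p_2<6$, together with the mixed sub-cases at each boundary where the agent free to choose picks a two-good bundle, we have $B_1=\{g_2,g_4\}$ and $B_2=\{g_1,g_4\}$. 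Market clearing of $g_1$ then yields $1/p_3+1/p_4=1$, of $g_3$ (as in Lemma~\ref{lem.L1}) yields $p_3^2-2p_3+p_2=0$, and the total-price identity $p_2+p_3+p_4=4$ combine, after elimination, into the cubic $p_3^3-4p_3^2+6p_3-4=0$, which factors as $(p_3-2)(p_3^2-2p_3+2)=0$ and has unique real root $p_3=2$; this forces $p_2=0$, again a contradiction.

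The only surviving options are therefore $p_4/p_2=3$ with $B_1=\{g_1,g_2,g_4\}$ and $B_2=\{g_1,g_4\}$, and $p_4/p_2=6$ with $B_2=\{g_1,g_2,g_4\}$ and $B_1=\{g_2,g_4\}$, exactly as the lemma asserts. The main obstacle will be the interior interval $3<p_4/p_2<6$ and the analogous mismatched boundary sub-cases, where local optimality alone is consistent with $(B_1,B_2)=(\{g_2,g_4\},\{g_1,g_4\})$; eliminating these requires deriving the cubic above from three independent clearing equations and observing that its unique real root violates $p_2>0$. The remaining cases reduce to short market-clearing arguments.
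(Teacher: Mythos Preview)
Your argument is correct, but it proceeds quite differently from the paper's.

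The paper handles the ``both agents buy only two goods'' case by a structural cycle argument: once $A_1,A_2$ split $\{g_1,g_2\}$ and each takes $g_4$, the allocation graph on goods is the $4$-cycle $g_1\!-\!g_4\!-\!g_2\!-\!g_3\!-\!g_1$, and a single traversal of this cycle using $\size(i)=\cost(i)=1$ shows that all four ratios $r_i/r_j$ coincide, forcing $p_1=p_2$. The paper then treats the second part (why the other agent buys a specific two-good bundle) separately, by comparing the slope $(p_4-p_1)/(p_4-p_2)$ dictated by the three-good buyer against the suboptimality inequality for the remaining agent.

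You instead organise everything around the single parameter $p_4/p_2$ and the two agent-specific thresholds $r_1=3$, $r_2=6$. This is a tidy unification: the trichotomy immediately yields the second part of the lemma (since $r_1\ne r_2$) and also pins down the \emph{only} possible two-good/two-good configuration as $B_1=\{g_2,g_4\}$, $B_2=\{g_1,g_4\}$, which you then eliminate by solving the market-clearing system explicitly and landing on $p_3^3-4p_3^2+6p_3-4=(p_3-2)(p_3^2-2p_3+2)=0$, hence $p_2=0$. The paper's cycle argument is more conceptual and utility-independent (it would work for any ratios $r_1\neq r_2$ without touching a cubic), while your approach is more computational but has the advantage that the second statement of the lemma falls out of the same trichotomy with no extra inequality chasing.

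Two small points of presentation: the correspondence ``$B_i=\{g_1,g_4\}\Leftrightarrow p_4<r_ip_2$'' etc.\ should carry weak inequalities (at the boundary all three bundles are optimal), which you in fact use when you speak of ``the agent free to choose''; and the equation $p_3^2-2p_3+p_2=0$ is not in Lemma~\ref{lem.L1} itself but follows from the $A_3,A_4$ structure that Lemma~\ref{lem.L1} establishes, as you derive.
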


\begin{proof}
Since all the bundles are of Type D, every bundle has at least two goods; clearly,
every good is bought by at least two agents.

Suppose that every agent buys two goods and every good is bought by two agents. If so, one of $A_1, A_2$ must buy $g_1, g_4$ and the other must buy $g_2,g_4$. Consider the graph with the goods as nodes
and an edge joining two nodes if they are bought by the same agent. 
This graph must be the 4-cycle $g_1, g_4, g_2, g_3, g_1$.
Therefore for some $a$, $0 < a < 1$, each agent buys $a$ units of one good and $b = 1-a$ units of the second good and each good is sold to two agents in the amounts of $a$ and $b$.

Let $r_i = |1-p_i|$. Observe that for every edge $(g_i,g_j)$ of the cycle, one price is $<1$ and the other price is $>1$, and we have $ap_i + bp_j =1$. Therefore $a r_i - b r_j=0$,
and $\frac{r_i}{r_j}=\frac{b}{a}$.
Hence 
\[ \frac{r_1}{r_4}=\frac{r_4}{r_2}=\frac{r_2}{r_3}=\frac{r_3}{r_1} , \]
which implies that all the $r_i$ are equal. Therefore $p_1 =p_2$, contradicting the previous claim that $p_1 < p_2$.
Hence at least one of $A_1, A_2$ will buy all three of her desirable goods.

Suppose that $A_1$ buys all three desirable goods $g_1, g_2, g_4$.
Then we have $\alpha_1 p_j + \mu_1 = u_{1j}$ for $j=1,2,4$.
Therefore, $(p_4-p_1)/(p_4-p_2)= (u_{14}-u_{11})/(u_{14}-u_{12})=30/20$.
Agent $A_2$ buys $g_4$ and at least one of $g_1, g_2$.
Suppose she buys $g_2$.
Then $\alpha_2 p_j + \mu_2 = u_{2j}$ for $j=2,4$,
hence $\alpha_2(p_4 - p_2) = u_{24}-u_{22} = 25$.
This implies that $\alpha_2(p_4 - p_1) > 30 = u_{24}-u_{21}$,
hence $\alpha_2 p_1 + \mu_2 < u_{21}$, a contradiction.
Therefore $A_2$ does not buy $g_2$ and she buys $g_1$ and $g_4$ only.

Next suppose $A_2$ buys all three desirable goods $g_1, g_2, g_4$.
By a similar argument we will prove that $A_1$ buys only two goods.
We have $\alpha_2 p_j + \mu_2 = u_{2j}$ for $j=1,2,4$.
Therefore, $(p_4-p_1)/(p_4-p_2)= (u_{24}-u_{21})/(u_{24}-u_{22})=30/25$.
Agent $A_1$ buys $g_4$ and at least one of $g_1, g_2$.
Suppose that she buys $g_1$.
Then $\alpha_2 p_j + \mu_2 = u_{2j}$ for $j=1,4$,
hence $\alpha_2(p_4 - p_1) = u_{24}-u_{21} = 30$.
This implies that $\alpha_2(p_4 - p_2) > 20 = u_{14}-u_{12}$,
hence $\alpha_2 p_2 + \mu_2 < u_{12}$, a contradiction.
Therefore, $A_1$ does not buy $g_1$, hence she buys $g_2$ and $g_4$ only.
\end{proof}

\medskip

\begin{theorem}
The instance of Table \ref{tab:ctexample} has two equilibria; in both, allocations to agents and prices of goods, other than the zero-priced good, are all irrational.
\begin{enumerate}
	\item  $p_1=0, \ \ p_2= (23- \sqrt{17})/32$, \ \
$p_3 = ( 9 + \sqrt{17}) /8$, \ \
$p_4 = (69 - 3 \sqrt{17})/32$.
\item
$p_1=0, \ \ p_2= (41 -\sqrt{113})/98, \ \ 
p_3= (15 +\sqrt{113} )/14, \ \
p_4= (246 -6 \sqrt{113})/98$.
\end{enumerate}
\end{theorem}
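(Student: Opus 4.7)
The plan is to apply Lemma \ref{lem.L2}, which restricts attention to two structural cases: Case 1, where $A_1$ buys her entire set of desirable goods $\{g_1,g_2,g_4\}$ and $A_2$ buys only $\{g_1,g_4\}$; and Case 2, where the roles of $A_1$ and $A_2$ are swapped so that $A_2$ buys $\{g_1,g_2,g_4\}$ and $A_1$ buys only $\{g_2,g_4\}$. Lemma \ref{lem.L1} forces $A_3$'s bundle to be $\{g_1,g_3\}$ and $A_4$'s to be $\{g_2,g_3\}$, so the combinatorial pattern is completely determined in each case. For each case I will reduce the optimality, size, cost, and market-clearing conditions to a single quadratic in $p_2$, select the admissible root, and exhibit the corresponding equilibrium prices.

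For Case 1, optimality of $A_1$ on her three-good bundle yields $\mu_1=10$, $\alpha_1 p_2=10$, $\alpha_1 p_4=30$, hence $p_4=3p_2$. Since every agent is Type D and so spends exactly one dollar, the identity $\sum_j p_j=4$ gives $p_3=4-4p_2$. The size and cost equations then determine the bundles of $A_2, A_3, A_4$ in closed form in $p_2$. Agent $A_1$'s bundle has one degree of freedom because $g_1, g_2, g_4$ are simultaneously optimal, which forces her utility to be constant on the whole feasible family restricted to these three goods; market clearing of $g_4$ removes this freedom, and the remaining market-clearing equation for $g_2$ then produces the quadratic $16p_2^2 - 23p_2 + 8 = 0$. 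The admissible root---the one lying in the price range mandated by Lemma \ref{lem.L1} and by the constraints $x_{ij}\ge 0$---is $p_2=(23-\sqrt{17})/32$, yielding $p_3=(9+\sqrt{17})/8$ and $p_4=(69-3\sqrt{17})/32$. Case 2 is handled identically with $A_2$ as the three-good buyer: her ratio $(u_{24}-u_{21})/(u_{22}-u_{21})=6$ gives $p_4=6p_2$ and $p_3=4-7p_2$, and the analogous derivation produces $49p_2^2-41p_2+8=0$, whose admissible root $p_2=(41-\sqrt{113})/98$ gives the second equilibrium's prices.

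The main obstacle is keeping the bookkeeping clean enough to extract a single scalar equation in $p_2$ from the many coupled linear equations. In each case the indifferent agent contributes a one-parameter family of optimal bundles, so one must use two market-clearing equations---once to eliminate her free parameter and once more to produce the scalar condition---while carefully checking the inequalities needed for $x_{ij}\ge 0$ in order to pick the correct root of the resulting quadratic. Once this is done, irrationality of the prices is immediate because neither $17$ nor $113$ is a perfect square, and since every $x_{ij}$ is a rational function of $p_2$ with rational coefficients, all nonzero allocations and prices in both equilibria are irrational.
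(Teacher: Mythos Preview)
Your approach is correct and reaches the same two equilibria, but the parametrization differs from the paper's. The paper introduces $r_i=|1-p_i|$, exploits that a Type~D agent who buys two goods $j,k$ does so in the proportion $r_k:r_j$, and derives three equations in $r_2,r_3,r_4$ (with $r_1=1$); eliminating in favor of $y=r_3$ yields a quartic $y(y+1)(4y^2-y-1)=0$ (Case~1) or $y(y+1)(7y^2-y-4)=0$ (Case~2), whose positive root gives the prices. You instead read off the linear relations $p_4=3p_2$ (or $p_4=6p_2$) directly from the three-good agent's dual variables, combine with $\sum_j p_j=4$ to get $p_3$, and land on a quadratic in $p_2$ straight away. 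Your route is a bit more economical, since it avoids the change of variables and the spurious quartic factors; the paper's $r_i$-formulation, on the other hand, makes the allocation formulas for the two-good agents especially transparent and is what drives their clean derivation of equation~(1), $r_3^2=r_2$.

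One small caution: your last sentence overstates slightly. That every $x_{ij}$ is a rational function of the irrational $p_2$ does not by itself force every nonzero $x_{ij}$ to be irrational (a rational function of $a+b\sqrt d$ can land in $\mathbb Q$). In practice one checks that the $\sqrt d$-coefficient of each allocation is nonzero; the paper, too, simply records the explicit allocations and leaves this observation implicit (see Remark~\ref{rem.irrational}). Also, when you discard the larger root of each quadratic, note that the statement of Lemma~\ref{lem.L1} only gives $p_2<1$; you need $p_3>1$ together with your relation $p_3=4-4p_2$ (resp.\ $4-7p_2$) to force $p_2<3/4$ (resp.\ $<3/7$), which is exactly what rules out the other root.
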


\begin{proof}
Let $r_i = |1-p_i|$. By Lemma \ref{lem.L1}, $r_1=1$. We consider the two cases established in Lemma \ref{lem.L2}.

{\bf Case 1.} $A_1$ buys $g_1, g_2, g_4$, and
$A_2$ buys $g_1, g_4$.

Agent $A_3$ spends her dollar on goods $g_1, g_3$ in the proportion
$r_3 : r_1$, i.e., $r_3:1$.
Therefore, $x_{31} = \frac{r_3}{1+r_3}, \ x_{33}= \frac{1}{1+r_3}$.
Agent $A_4$ buys goods $g_2, g_3$ in the proportion
$r_3 : r_2$.
Therefore, $x_{42} = \frac{r_3}{r_2+r_3}, \ x_{43}= \frac{r_2}{r_2+r_3}$.
Since only agents $A_3$ and $A_4$ buy good $g_3$,
we have $x_{31} = 1-x_{33}= x_{43}$, and $x_{42} =1-x_{43}= x_{33}$.
This implies $r_3^2= r_2$ ...  (1).

Since agent $A_1$ buys $g_1, g_2, g_4$, we have,
$\frac{u_{14}-u_{12}}{u_{12}-u_{11}}=\frac{p_{14}-p_{12}}{p_{12}-p_{11}}$.
Therefore $r_2+r_4 = 2(1-r_2)$ ... (2).

Agent $A_2$ buys goods $g_1, g_4$ in the proportion $r_4: r_1$.
Therefore, $x_{21} = \frac{r_4}{1+r_4}, \ x_{24}= \frac{1}{1+r_4}$.
Hence, $x_{14}= 1-x_{24}= \frac{r_4}{1+r_4}$.

Since $g_2$ is bought by $A_1$ and $A_2$ only, we have $x_{12} = 1- x_{42} = \frac{r_2}{r_2+r_3}$.

Since the cost of $A_1$'s bundle is 1,
$p_1 x_{11} + p_2 x_{12} + p_4 x_{14}=1$.
Therefore, $(1-r_2)\frac{r_2}{r_2+r_3} + (1+r_4)\frac{r_4}{1+r_4} =1$.
Hence, $(1-r_2)r_2 + (r_4-1)(r_2+r_3) = 0$ ... (3).

Now we have three equations, (1), (2) and (3), in three unknowns $r_1, r_2, r_3$.
Using (1) and (2) we can express $r_1$ and $r_2$ in terms of $r_3$.
Letting $r_3=y$, we have from (1), $r_2 = y^2$,
and from (2), $r_4 = 2- 3 r_2 = 2 - 3 y^2$.
Substituting into (3), we get
$(1-y^2)y^2 +(1-3y^2)(y^2+y)=0$,
which simplifies to $y(y+1)(4y^2 -y -1) =0$.

The only positive solution is $y = \frac{1+\sqrt{17}}{8}$.
Therefore, 
\[p_1 = 0, \ \ \ \ p_2 = 1-r_2= 1-y^2 = \frac{23- \sqrt{17}}{32}, \ \ \ \ 
p_3 =1+r_3= 1+y = \frac{9 + \sqrt{17}}{8}, \] 
\[p_4 = 1+r_4= 3-3y^2 = \frac{69 - 3 \sqrt{17}}{32} .\]

Once we have the value of $y$, we get:
\[ r_1 = 1, \ \ \ r_2 = y^2, \ \ \ r_3 = y \ \ \ \mbox{and} \ \ \ r_4 = 2(1-y^2) - y.\]
In terms of these four values, the allocations of the agents are:
\[ A_1: \ \ x_{12} = \frac{r_2}{r_2 + r_3}, \ \ \ \ x_{13}= 1 - \frac{1}{1+r_3} - \frac{r_2}{r_2 + r_3}, \ \ \ \ r_{14} = \frac{r_4}{1+r_4} \]
\[ A_2: \ \ x_{21} = \frac{r_4}{1+r_4}, \ \ \ \ x_{24} = \frac{1}{1 + r_4} \]
\[ A_3: \ \ x_{31} = \frac{r_3}{1+r_3}, \ \ \ \ x_{33} = \frac{1}{1 + r_3} \]
\[ A_4: \ \ x_{41} = \frac{r_3}{r_2+r_3}, \ \ \ \ x_{43} = \frac{r_2}{r_2 + r_3} \]

\medskip

{\bf Case 2.} $A_2$ buys $g_1, g_2, g_4$, and
$A_1$ buys $g_2, g_4$.

The allocations for agents $A_3, A_4$ are the same as above, i.e.,
$x_{31} = \frac{r_3}{1+r_3}, \ \ x_{33}= \frac{1}{1+r_3}$,
and $x_{42} = \frac{r_3}{r_2+r_3}, \ \ x_{43}= \frac{r_2}{r_2+r_3}$.
Again we have $x_{31} = x_{43}$ and $x_{42} = x_{33}$,
which implies $r_3^2= r_2$  ... (1).

Since agent $A_2$ buys $g_1, g_2, g_4$, we have,
$\frac{u_{24}-u_{22}}{u_{22}-u_{21}}=\frac{p_{24}-p_{22}}{p_{22}-p_{21}}$,
therefore $r_2+r_4 = 5(1-r_2)$ ... (2').

$A_1$ buys goods $g_2, g_4$ in the proportion $r_4: r_2$.
Therefore, $x_{12} = \frac{r_4}{r_2+r_4}, x_{14}= \frac{r_2}{r_2+r_4}$.
Hence, $x_{24}= 1-x_{14}= \frac{r_4}{r_2+r_4}$.

Since $g_2$ is bought by $A_2$ and $A_4$ only, we have $x_{22} = 1- x_{12}-x_{42}$.

Since the cost of $A_2$'s bundle is 1,
$p_1 x_{21} + p_2 x_{22} + p_4 x_{24}=1$.
Since $p_1=0$ and $x_{24}=\frac{r_4}{r_2+r_4}=x_{12}$,
we have $p_2 (1- x_{12}-x_{42}) + p_4 x_{12}=1$,
hence $(p_4-p_2)x_{12} -p_2 x_{42} -(1-p_2)=0$.
Therefore, $(r_2+r_4) \frac{r_4}{r_2+r_4} -(1-r_2)\frac{r_3}{r_2+r_3}-r_2 =0$.
Some algebra, yields the same equation as in the previous case,
$(1-r_2)r_2 + (r_4-1)(r_2+r_3) = 0$ ... (3).

We can solve now for $r_1, r_2, r_3$.
Using (1) and (2') we can express $r_1$ and $r_2$ in terms of $r_3$.
Letting $r_3=y$, we have from (1), $r_2 = y^2$,
and from (2'), $r_4 = 5- 6 r_2 = 5 - 6 y^2$.
Substituting into (3), we get
$(1-y^2)y^2 +(4-6y^2)(y^2+y)=0$,
which simplifies to $y(y+1)(7y^2 -y -4) =0$.

The only positive solution is $y = \frac{1+\sqrt{113}}{14}$.
Therefore, 
\[p_1 = 0, \ \ \ \ p_2 = 1-r_2= 1-y^2 =\frac{41 -\sqrt{113}}{98}, \ \ \ \ 
p_3 =1+r_3= 1+y = \frac{15 +\sqrt{113}}{14}, \] 
\[p_4 = 1+r_4= 6-6y^2 = \frac{246 -6 \sqrt{113}}{98}.\]

As in the previous case, the value of $y$ gives:
\[ r_1 = 1, \ \ \ r_2 = y^2, \ \ \ r_3 = y \ \ \ \mbox{and} \ \ \ r_4 = 5 - 6y^2.\]

In terms of these four values, the allocations of the agents are:
\[ A_1: \ \ x_{12} = \frac{r_4}{r_2 + r_4}, \ \ \ \ x_{14}= \frac{r_2}{r_2 + r_4} \]
\[ A_2: \ \ x_{22} = 1 - \frac{r_4}{r_2 + r_4} - \frac{r_3}{r_2 + r_4}, \ \ \ \ x_{24} = \frac{r_4}{r_2 + r_4} \]
\[ A_3: \ \ x_{31} = \frac{r_3}{1+r_3}, \ \ \ \ x_{33} = \frac{1}{1 + r_3} \]
\[ A_4: \ \ x_{42} = \frac{r_3}{r_2+r_3}, \ \ \ \ x_{43} = \frac{r_2}{r_2 + r_3} \]

\end{proof}

Since this example has disconnected equilibria, we get:

\begin{corollary}
	The Hylland-Zeckhauser scheme does not admit a convex programming formulation.
\end{corollary}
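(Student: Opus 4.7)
The plan is to derive the corollary by contradiction from the preceding theorem, exploiting the classical fact that the set of optima of a convex program — minimizing a convex function over a convex feasible region — is itself a convex set. Concretely, if the Hylland–Zeckhauser scheme admitted a convex programming formulation, there would exist a convex program whose set of optima projects onto the set of HZ equilibria $(x,p)$. Since the image of a convex set under a linear projection is convex, the equilibrium set of every HZ instance would in turn have to be a convex subset of $\mathbb{R}^{n\times n}\times \mathbb{R}^n$.

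I would then apply this necessary condition to the instance of Table \ref{tab:ctexample}. The theorem just established furnishes exactly two equilibria $(x^{(1)}, p^{(1)})$ and $(x^{(2)}, p^{(2)})$ with distinct price vectors — for example, $p^{(1)}_2 = (23-\sqrt{17})/32 \neq (41-\sqrt{113})/98 = p^{(2)}_2$, and similarly $p^{(1)}_3 \neq p^{(2)}_3$ and $p^{(1)}_4 \neq p^{(2)}_4$. Any strict convex combination $\lambda (x^{(1)}, p^{(1)}) + (1-\lambda)(x^{(2)}, p^{(2)})$ with $\lambda \in (0,1)$ has price coordinates strictly between those of $p^{(1)}$ and $p^{(2)}$, hence different from both; such a combination would populate a continuum of equilibria for this instance, contradicting the theorem's enumeration of exactly two.

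The main conceptual step to get right is pinning down what should be meant by a \emph{convex programming formulation}: the natural reading is that there is a convex program (possibly with auxiliary variables) such that $(x,p)$ is an HZ equilibrium if and only if it is the projection of some optimal solution. The projection of the convex optimum set onto the $(x,p)$-coordinates is still convex, so the two-point, non-collinearly-extendable equilibrium set of the Table \ref{tab:ctexample} instance cannot arise as such a projection. Once this framing is in place, the contradiction with the explicit irrational values computed in the proof of the theorem is immediate, and no further calculation is required.
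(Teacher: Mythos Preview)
Your argument is correct and matches the paper's reasoning. The paper states the corollary as an immediate consequence of the preceding theorem with the single remark ``Since this example has disconnected equilibria, we get:'' and gives no further proof; your write-up simply makes explicit the standard fact underlying that remark, namely that the optimum set of a convex program (and any linear projection of it) is convex, hence connected, whereas the two isolated equilibria of the Table~\ref{tab:ctexample} instance are not.
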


\begin{remark}
	\label{rem.irrational}
	Observe that in both equilibria, the allocations of all four agents are irrational even though each one of them spends their dollar completely and the allocations form a fractional perfect matching, i.e., add up to 1 for each good and agent.
\end{remark}



	\section{The Class FIXP}
\label{sec.FIXP}

The class FIXP was introduced in \cite{EY07}. It is essentially the class of problems that can be cast (in polynomial time) as the problem of computing a fixed point of an algebraic Brouwer function. Recall that basic complexity classes, such as P, NP, NC and \#P, are defined via machine models. For the class FIXP, the role of a ``machine model'' is played by one of the following: a straight line program, an algebraic formula, or a circuit; further it must use the standard arithmetic operations of +, - * /, $\min$ and $\max$. These restrictions imply that the function defined by this ``machine'' is guaranteed to be continuous.

The quintessential complete problem for the class FIXP is Nash equilibrium; this was established in \cite{EY07}. A three-player game may have only irrational equilibria; this was pointed out by Nash himself \cite{Nash1951non}. Corresponding to a three-player game, Nash defined a continuous function on a closed, compact domain via an algebraic formula and showed that the fixed points of this function are exactly the equilibria of the game. Since Brouwer's Theorem guarantees existence of fixed points, each such a game must admit an equilibrium. Nash's formula uses the standard algebraic operations and $\min, \max$, which implies immediately the membership of the Nash equilibrium problem in FIXP.

The problem of computing an equilibrium for a $k$-player game, for $k \geq 3$, game is FIXP-complete  \cite{EY07}. In contrast, a two-player bimatrix game always admits rational equilibria and computing it is PPAD-complete \cite{Daskalakis2009complexity, Chen2009settling}; the class PPAD was defined in \cite{PPAD}. The two classes, PPAD and FIXP seem to be rather disparate: whereas PPAD is contained in function classes NP $\cap$ co-NP, the class FIXP lies somewhere between P and PSPACE, and is likely to be closer to the harder end of PSPACE.

Since we will establish membership in FIXP using straight line programs, we provide further details on this ``machine model'' for FIXP. Such a program should satisfy the following:
\begin{enumerate}
	\item The program does not have any conditional statements, such as if ... then ... else.
	\item It uses the standard arithmetic operations of +, - * /, $\min$ and $\max$.
	\item It never attempts to divide by zero.
\end{enumerate}

A {\em total problem} is one which always has a solution, e.g., Nash equilibrium and Hylland-Zeckhauser equilibrium. A total problem is in FIXP if there is a polynomial time algorithm which given an instance $I$ of length $|I|=n$, outputs a polynomial sized straight line program which computes function $F_I$ on a closed, convex, real-valued domain $D(n)$. Because of the restrictions imposed on the straight line program, function $F_I$ will be continuous and by Brouwer's Theorem, has at least one fixed point. We require that each fixed point of $F_I$ be a solution to instance $I$. Hence, a proof of membership in FIXP requires us to provide the domain for function $F$ and a straight line program which defines $F$.

	\section{Membership in FIXP}
\label{sec.membership}

Let $p$ and $x$ denote the allocation and price variables. We will give a function $F$ over these variables and a closed, compact, real-valued domain $D$ for $F$. The function will be specified by a polynomial length straight line program using the algebraic operations of $+, -, *, /, \min$ and $\max$, hence guaranteeing that it is continuous. We will prove that all fixed points of $F$ are equilibrium allocations and prices, hence proving that Hylland-Zeckhauser is in FIXP.

{\bf Notation:} We will denote the set $\{ 1, \ldots . n \}$ by $[n]$.  $x_i$ will denote agent $i$'s bundle. For each agent $i$, choose one good from $G_i^*$ and denote it by $i^*$. 
If $e$ is an expression, we will use $(e)_+$ as a shorthand
for $\max \{0, e \}$.

Domain $D = D_p \times D_x$, where $D_p$ and $D_x$ are the domains for $p$ and $x$, respectively, with $D_p = \{p \ | \ \forall j \in [n], p_j \in [0, n] \}$ and $D_x = \{x \ | \ \forall i \in [n], \ \sum_{j \in G} {x_{ij}} = 1, \ \mbox{and} \ \forall i,j \in [n], \ x_{ij} \geq 0 \}$.

Let $(p', x') = F(p, x)$. $(p, x)$ can be viewed as being composed of $n+1$ vectors of variables, namely $p$ and for each $i \in [n], \ x_i$. Similarly, we will view $F$ as being composed of $n+1$ functions, $F_p$ and for each $i \in [n], \ F_i$, where $p' = F_p(p, x)$ and for each $i \in [n]$, $x'_i = F_i(p, x)$. The straight line programs for $F_p$ and $F_i$ are given in Algorithm \ref{alg.p} and Algorithm \ref{alg.x}, respectively. It is easy to see that if $F_i$ alters a bundle, the new bundle still remains in the domain; in particular, $\forall i \in [n], \ \size(i) = 1$.  

{\bf Requirements on $F$:} Observe that $(p, x)$ will be an equilibrium for the market if, in addition to the conditions imposed by the domain, it satisfies the following:
\begin{enumerate}
	\item $\forall j \in [n], \ \sum_{i \in A} {x_{ij}} = 1$.
	\item $\forall i \in [n], \ \cost(i) \leq 1$.
	\item $\forall i \in [n]$, $x_i$ is an optimal bundle for $i$.
\end{enumerate}
 
Function $F$ has been constructed in such a way that if any of these conditions is not satisfied by $(p, x)$, then $F(p, x) \neq (p, x)$, i.e., $(p, x)$ is not a fixed point of $F$. Equivalently, every fixed point of $F$ must satisfy all these conditions and is therefore an equilibrium.

\bigskip

\setcounter{figure}{1} 

\begin{figure}

	\begin{wbox}
		\begin{alg}
		\label{alg.p}
		{\bf Straight line program for function $F_p$}\\
		
		\begin{enumerate}
			\item For all $j \in [n]$ do: $p_j
		             \leftarrow \min \{n, \max\{0,  p_j + \sum_{i \in A} {x_{ij}	-1} \} \}$	
		             \item  $r \leftarrow \min_{j \in [n]} \{p_j\}$
		             \item  For all $j \in [n]$ do: $p_j \leftarrow p_j - r$
			 \end{enumerate}
		\bigskip
		\end{alg}
	\end{wbox}
\end{figure} 

\bigskip
\bigskip
\bigskip
\bigskip

\setcounter{figure}{1} 

\begin{figure}

	\begin{wbox}
		\begin{alg}
		\label{alg.x}
		{\bf Straight line program for function $F_i$}\\
		
		\begin{enumerate}
			\item $r \leftarrow ( \sum_j p_j x_{ij} -1)_+$. 
			\item For all $j \in [n]$ do: 
	            $x_{ij} \leftarrow \frac{x_{ij}+r \cdot (1-p_j)_+}{1+r \cdot \sum_k (1-p_k)_+}$
	            
	            \bigskip
			\item $t \leftarrow (1 - \sum_j p_j x_{ij})_+ $
			\item  For all $ k \notin G_i^*$ do:
			\begin{enumerate}
			\item $d \leftarrow \min \{x_{ik}, \frac{t}{n^2} \}$
			\item  $x_{ik} \leftarrow x_{ik} - d$ 
			\item  $x_{ii^*} := x_{ii^*} +d$
			\end{enumerate}

			\bigskip
			\item  For all pairs $j, k$ of goods s.t. $u_{ij} \leq u_{ik}$ do:
			 \begin{enumerate}
			\item   $d \leftarrow \min \{ x_{ij}, (p_j -p_k)_+ \}$
			\item   $x_{ij} \leftarrow x_{ij} + d/n^2$
			\item   $x_{ik} \leftarrow x_{ik} - d/n^2$
			\end{enumerate}
			
			\bigskip
			\item For all triples $j, k, l$ of goods such that $u_{ij} < u_{ik} < u_{il}$ do:
		    \begin{enumerate}
		    	\item  $d \leftarrow \min \{ x_{ik}, ( (u_{il}-u_{ik})(p_k -p_j) - (u_{ik}-u_{ij})(p_l -p_k) )_+ \}$
		    	\item  $x_{ik} \leftarrow x_{ik} -d$
		    	\item  $x_{ij} \leftarrow x_{ij} + \frac{u_{il}-u_{ik}}{u_{il}-u_{ij}}d$
		    	\item  $x_{il} \leftarrow x_{il} + \frac{u_{ik}-u_{ij}}{u_{il}-u_{ij}}d$
			 \end{enumerate}
			 
			 \bigskip
			\item For all triples $j, k, l$ of goods such that $u_{ij} < u_{ik} < u_{il}$ do:
			 \begin{enumerate}
			 \item $d := \min ( x_{ij}, x_{il}, ( (u_{ik}-u_{ij})(p_l -p_k) - (u_{il}-u_{ik})(p_k -p_j) )_+ )$
			 \item  $x_{ik} := x_{ik} + d$
			 \item  $x_{ij} := x_{ij} - \frac{u_{il}-u_{ik}}{u_{il}-u_{ij}}d$
			 \item  $x_{il} := x_{il} - \frac{u_{ik}-u_{ij}}{u_{il}-u_{ij}}d$
			 \end{enumerate}
						
		\end{enumerate}
		\bigskip
		\end{alg}
	\end{wbox}
\end{figure} 

\begin{lemma}
	\label{lem.two}
	If $(p, x)$ is a fixed point of $F$, as defined in Algorithms \ref{alg.p} and \ref{alg.x}, then
	\begin{enumerate}
		\item $\forall i \in [n], \ \cost(i) \leq 1$.
		\item $\forall j \in [n], \ \sum_{i \in A} {x_{ij}} = 1$.
		\item $\exists z \in G$ such that $p_z = 0$.
	\end{enumerate}
\end{lemma}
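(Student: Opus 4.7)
My plan is to prove the three parts in the order (3), (1), (2), since each proof uses the previous result.

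First, I would extract part~3 directly from the fixed-point condition on $F_p$. Write $\tilde p_j = \min\{n,\max\{0, p_j+\sum_i x_{ij}-1\}\}$ for the result of Step~1 of Algorithm~\ref{alg.p}, and let $R = \min_j \tilde p_j$ be the quantity subtracted in Step~3. The fixed-point condition reads $\tilde p_j = p_j + R$ for every $j$. Taking the minimum over $j$ of both sides gives $R = \min_j p_j + R$, hence $\min_j p_j = 0$, which is part~3.

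Next I would handle part~1 using part~3. At a fixed point, Steps~1--2 of Algorithm~\ref{alg.x} must leave $x_i$ unchanged. By part~3 there is a good $z$ with $p_z=0$, so $s := \sum_k (1-p_k)_+ \geq 1$. Suppose for contradiction that $r := (\cost(i)-1)_+ > 0$. Requiring the updated value $(x_{ij}+r(1-p_j)_+)/(1+rs)$ to equal the old $x_{ij}$ and cancelling $r>0$ yields $x_{ij} = (1-p_j)_+/s$ for every $j$. In particular $x_{ij}=0$ whenever $p_j\ge 1$, so
\[\cost(i) \;=\; \frac{1}{s}\sum_{j:\, p_j<1} p_j(1-p_j) \;<\; \frac{1}{s}\sum_{j:\, p_j<1} (1-p_j) \;=\; 1,\]
where the strict inequality comes from $(1-p_j)^2>0$ for each $p_j<1$ (the summation range is nonempty because it contains $z$). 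This contradicts $r>0$, so $r=0$ and $\cost(i)\le 1$.

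For part~2 I revisit the fixed-point case analysis of $F_p$. Writing $\sigma_j=\sum_i x_{ij}$, the identity $\tilde p_j = p_j+R$ splits into: if $p_j+R=0$ then $p_j=R=0$ and $\sigma_j\le 1$; if $0<p_j+R<n$ then $\sigma_j=R+1$; if $p_j+R=n$ then $\sigma_j\ge R+1$. If $R>0$, every $\sigma_j$ would exceed $1$, contradicting $\sum_j\sigma_j=n$; hence $R=0$. Now using part~1, $\sum_j p_j\sigma_j = \sum_i \cost(i)\le n$, so any good with $p_j=n$ has $\sigma_j\le 1$, which combined with $\sigma_j\ge 1$ from the third case forces $\sigma_j=1$. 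Every other good has $\sigma_j\le 1$. Since $n$ values each at most $1$ sum to $n$, all must equal~$1$, completing part~2.

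I expect part~1 to be the main obstacle: squeezing enough structural information out of the fixed-point condition on the Step~2 renormalization to derive a contradiction. The two crucial moves are invoking part~3 to ensure $s\ge 1$ (so the denominator $1+rs$ is strictly greater than~$1$ whenever $r>0$), and then exploiting the elementary inequality $p(1-p)<1-p$ for $p<1$ to obtain a strict upper bound of $1$ on $\cost(i)$.
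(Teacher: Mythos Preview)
Your arguments for parts~3 and~2 are correct and, for part~2, arguably cleaner than the paper's (you do a systematic case split on whether $\tilde p_j$ hits $0$, lies strictly between, or hits $n$, rather than the paper's ad hoc choice of an undersold $k$ and oversold $l$).

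The gap is in part~1. You assert that ``at a fixed point, Steps~1--2 of Algorithm~\ref{alg.x} must leave $x_i$ unchanged,'' but the fixed-point condition is on the full composition $F_i$ (Steps~1 through~7), not on Steps~1--2 alone. Your subsequent computation is correct and proves something useful: it shows that if $r>0$ then Step~2 \emph{cannot} fix $x_i$ (since the unique solution of $x_{ij}=(1-p_j)_+/s$ has cost strictly below~1). In other words, you have rigorously established that Steps~1--2 genuinely modify $x_i$ whenever $\cost(i)>1$. What is missing is the second half: why can Steps~3--7 not then restore the original $x_i$? A priori they could, and nothing in your write-up rules this out.

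The paper's proof shares exactly this gap: it simply asserts ``the subsequent steps of $F_i$ will not restore the initial value of $x_i$'' and moves on, deferring the justification to the later potential-function Lemma~\ref{lem.potential} (whose proof, incidentally, already assumes $\cost(i)\le 1$ in handling Step~2, so the logic there is not airtight either). So you are at the same level of rigor as the paper --- indeed slightly better, since you actually prove the ``Step~2 modifies'' claim that the paper only states. To close the gap cleanly, one can track cost through the pipeline: after Step~2 the cost drops to at most~1 (your inequality $s-q=\sum_{p_j<1}(1-p_j)^2\ge 1$ gives exactly this), Step~4 can raise cost by at most $t=(1-\cost)_+$ so cost stays $\le 1$, and Steps~5--7 only lower cost; hence the final cost is $\le 1$, but it must equal the initial cost $>1$, contradiction.
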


\begin{proof}
	For the first two statements, we argue below that if any of these conditions is not satisfied then $F(p, x) \neq (p, x)$, i.e., $(p, x)$ is not a fixed point. 
	\begin{enumerate}
			\item If for some $i \in [n], \ \cost(i) > 1$, then Steps 1 and 2 of $F_i$ will modify $x_i$. The subsequent steps of $F_i$ will not restore the initial value of $x_i$. 
		\item  Suppose that there is a good $j$ such that $\sum_i x_{ij} \neq 1$.
Since $\sum_j x_{ij} =1$ for all agents $i \in [n]$,
there must be a good $k$ such that $\sum_i x_{ik} < 1$, 
and another good $l$ such that $\sum_i x_{il} > 1$.

We claim that then $p_k =0$.
Since $\sum_i x_{ik} < 1$, if $p_k > 0$, then line 1 will strictly decrease $p_k$, and line 3 certainly does not increase it, contradicting $F_p(p,x) =p$.
Thus, $p_k=0$, the price $p_k$ will stay $0$ after line 1, hence 
$r=0$ in line 2, and line 3 will not change any prices.

On the other hand, we claim that $p_l =n$.
Since $\sum_i x_{il} > 1$, if $p_l < n$, then line 1 will increase strictly $p_l$, and since line 3 has no effect, this contradicts
$F_p(p,x) =p$.

But $\cost(i)= \sum_j p_j x_{ij} \leq 1$ for all $i \in [n]$ implies that
$\sum_i \sum_j p_j x_{ij} \leq n$, which contradicts
the fact that $p_l =n$ and $\sum_i x_{il} > 1$, hence 
$\sum_i p_l x_{il} > n$.

		\item Steps 2 and 3 of $F_p$ ensure that there is a good having price 0.
	\end{enumerate}
\end{proof} 
		
We will next prove that at a fixed point $(p, x)$, $x_i$, must be an optimal bundle for $i$. Before embarking on this, we need to prove that if $(p, x)$ is a fixed point with a zero-priced good, then no step of $F$ will change $(p, x)$, i.e., it couldn't be that some step(s) of $F$ change $(p, x)$ and some other step(s) change it back, restoring it to $(p, x)$. It is easy to check that $F_p$ will make no change. The next lemma uses a potential function argument to prove that $F_i$ will not change $x_i$. 

\begin{lemma}
	\label{lem.potential}
	Let $(p, x)$ be such that $p \in D_p$, $x \in D_x$ and $\exists z \in G$ such that $p_z = 0$. Then, only the following two possibilities hold about the steps taken by $F_i$: 
	\begin{enumerate}
		\item $\cost(i) < 1$ and $F_i$ strictly increases $\val(i)$ while maintaining $\cost(i) < 1$. 
		\item $F_i$ weakly decreases $\cost(i)$ and weakly increases $\val(i)$, with at least one change being strict.
	\end{enumerate}
\end{lemma}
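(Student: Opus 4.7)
The plan is to analyze each of the seven blocks of Algorithm~\ref{alg.x} in turn, describe its effect on $\cost(i)$ and $\val(i)$, and then compose the results. The zero-priced-good hypothesis $p_z = 0$ enters only through the denominator $1 + r \cdot \sum_k (1-p_k)_+$ in Step~2: since $(1-p_z)_+ = 1$, this denominator is strictly positive, and a short computation using the bound $\sum_{k:\, p_k < 1} (1-p_k)^2 \ge 1$ (with the ``$1$'' coming from $p_z$) shows that Step~2 always lands the post-update cost in $[0,1]$.

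First I would handle the two ``driving'' blocks. Steps~1--2 are inert when $\cost(i) \le 1$; when $\cost(i) > 1$ they strictly reduce the cost by taking a convex combination of $x_i$ with a distribution supported on the goods with $p_k < 1$. Steps~3--4 are inert when $\cost(i) = 1$ (since $t = 0$) or when $x_i$ is already supported on $G_i^*$; otherwise they transfer up to $t/n^2$ mass from each non-maximum-utility good to the designated good $i^* \in G_i^*$, which strictly raises $\val(i)$. The cap $d \le t/n^2$ is engineered so that the total additional cost over all $n$ goods is at most $t$, preserving $\cost(i) \le 1$ after the block.

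The remaining blocks are ``correction'' swaps that are value-non-decreasing and cost-non-increasing; checking this amounts to one-line calculations for Step~5 and slightly longer ones for Steps~6--7. For Steps~6--7 I would use the identity $u_{ij}(u_{il}-u_{ik}) + u_{il}(u_{ik}-u_{ij}) = u_{ik}(u_{il}-u_{ij})$ together with $\frac{u_{il}-u_{ik}}{u_{il}-u_{ij}} + \frac{u_{ik}-u_{ij}}{u_{il}-u_{ij}} = 1$: these imply that each triggered three-good swap preserves both $\size(i)$ and $\val(i)$ exactly, and a short algebraic expansion shows that the cost change equals $-d \cdot (\text{triggering expression})/(u_{il}-u_{ij})$, so the cost strictly drops precisely when the trigger fires with $d > 0$. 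Step~7 is the mirror image of Step~6.

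To finish, I would argue by composition. Whenever $F_i(x_i) \ne x_i$, some block produced a strict change, and by the guarantees above that change survives composition: subsequent blocks are individually value-non-decreasing and cost-non-increasing so they cannot cancel it. Case~(1) of the lemma is exactly the situation where the input satisfies $\cost(i) < 1$, Steps~1--2 and 5--7 are inert, and the strict action comes from Steps~3--4; case~(2) covers every other nontrivial outcome. The main technical obstacle is the bookkeeping around the $1/n^2$ scaling in Steps~4--7: one must check, using the $(\cdot)_+$ guards, that the total mass shifted in each of those blocks is small enough that $\cost(i) \le 1$ is maintained throughout and that no later block undoes the strict improvement produced by an earlier one.
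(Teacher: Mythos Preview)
Your proposal follows essentially the same block-by-block potential analysis as the paper: each step of $F_i$ is shown to weakly raise $\val(i)$ and (apart from Step~4, whose cost increase is bounded by $t$ via the $t/n^2$ cap) weakly lower $\cost(i)$, so any strict change in some block survives to the output. You go a bit beyond the paper in sketching the $\cost(i)>1$ case for Steps~1--2 via the bound $\sum_{k:\,p_k<1}(1-p_k)^2 \ge 1$ --- this is correct but not needed for Corollary~\ref{cor.potential}, since the paper simply invokes $\cost(i)\le 1$ there --- and your allocation of outcomes to cases~(1)/(2) should be tweaked: a run with initial $\cost(i)<1$ in which both Step~4 and some later step fire still falls under case~(1), not case~(2), because Step~4 may raise cost.
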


\begin{proof}
For the first part, observe that if $\cost(i) < 1$, the only step that can kick in is Step 3. If so, Step 4 maintains $\cost(i) < 1$ and increases $\val(i)$.

For the second part, we consider the following four cases.

\begin{enumerate}
	\item Since $\cost(i) \leq 1$, Step 1 will set $r$ to 0. As a result Step 2 will not make any change to $x_i$.
	\item Step 5 weakly increases $\val(i)$ and strictly decreases $\cost(i)$.
	\item If Step 6 kicks in, then the net change in $\val(i)$ is
\[d \frac{u_{il}-u_{ik}}{u_{il}-u_{ij}}u_{ij} + d \frac{u_{ik}-u_{ij}}{u_{il}-u_{ij}}u_{il} -d u_{ik} =0. \] 
The net change in $\cost(i)$ is
\[d \frac{u_{il}-u_{ik}}{u_{il}-u_{ij}}p_{j} + d \frac{u_{ik}-u_{ij}}{u_{il}-u_{ij}}p_{l} -d p_{k}
= \frac{\Delta}{u_{il}-u_{ij}}, \]
where $\Delta = (u_{ik}-u_{ij})(p_l -p_k)- (u_{il}-u_{ik})(p_l -p_k) < 0$.

\item
If Step 7 kicks in, then the net change in $\val(i)$ is again 0, and the
net change in cost is $\frac{  \Delta}{u_{il}-u_{ij}} < 0$.
\end{enumerate}
\end{proof}

\begin{corollary}
	\label{cor.potential}
	If $(p, x)$ is a fixed point with a zero-priced good, then no step of $F$ will change $(p, x)$.
\end{corollary}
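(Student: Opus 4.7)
The plan is a short potential-function argument built directly on Lemma \ref{lem.potential}. The corollary naturally splits into two pieces: (i) $F_p$ is a no-op at any fixed point with a zero-priced good, and (ii) no loop inside $F_i$ can alter $x_i$ at such a fixed point.

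For (i) I would simply read everything off Lemma \ref{lem.two}. Part~(2) of that lemma gives $\sum_i x_{ij}=1$ for every $j$, so line~1 of $F_p$ applies the clip $\min\{n,\max\{0,p_j+0\}\}$ to a value already in $[0,n]$ and leaves $p_j$ fixed; part~(3) supplies a good with price $0$, so $r=0$ in line~2 and line~3 also changes nothing. Hence $F_p$ never touches $p$ during its execution.

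For (ii) I would inspect the case analysis inside the proof of Lemma \ref{lem.potential} and extract the following \emph{per-step} version: each loop in $F_i$ either leaves $x_i$ untouched, or it weakly increases $\val(i)$ and weakly decreases $\cost(i)$, with at least one of the two being strict whenever the step actually modifies $x_i$. This immediately yields global monotonicity along the run of $F_i$: $\val(i)$ is nondecreasing and $\cost(i)$ is nonincreasing throughout. Now suppose for contradiction that some loop did alter $x_i$. Then at that moment either $\val(i)$ strictly rose or $\cost(i)$ strictly fell, and the monotonicity prevents any subsequent loop from undoing that strict change. Consequently the final bundle $x_i'$ would differ from $x_i$ in at least one of $\val(i),\cost(i)$, contradicting the fixed-point equality $x_i'=x_i$ and the resulting preservation of both quantities.

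The only mildly delicate point, and what I expect to be the main obstacle to writing this up crisply, is converting the \emph{net} conclusions of Lemma \ref{lem.potential} (which are phrased for all of $F_i$) into the \emph{per-step} monotonicity I use above. This is bookkeeping rather than new mathematics: the four cases in the proof of that lemma already treat the loops in isolation (Steps~1--2, Steps~3--4, Step~5, and each of Steps~6--7), so the per-step statement is essentially present there and just needs to be extracted and recorded explicitly before the contradiction step.
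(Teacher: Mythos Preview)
Your overall strategy matches the paper's: the corollary is meant to follow from the potential-function content of Lemma~\ref{lem.potential}, and your treatment of $F_p$ via Lemma~\ref{lem.two} is correct. However, the per-step invariant you extract for $F_i$ is false as stated. Steps~3--4 do \emph{not} weakly decrease $\cost(i)$: when $\cost(i)<1$ and some $x_{ik}>0$ with $k\notin G_i^*$, Step~4 shifts mass from $k$ to $i^*$, and if $p_{i^*}>p_k$ this strictly \emph{raises} cost. The proof of Lemma~\ref{lem.potential} only claims Step~4 ``maintains $\cost(i)<1$'', never that it decreases cost. Hence your assertion ``$\cost(i)$ is nonincreasing throughout'' fails, and the single combined-potential contradiction does not go through. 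This is not the bookkeeping you anticipated: the per-step statement you want is simply not present in the lemma's case~1.

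The repair is to run the two potentials sequentially rather than in parallel. First, $\val(i)$ \emph{is} nondecreasing at every step (Steps~1--2 are no-ops since $\cost(i)\le 1$ at a fixed point by Lemma~\ref{lem.two}; Steps~3--4 strictly raise value whenever they act; Step~5 weakly raises value; Steps~6--7 leave value unchanged). The fixed-point equality on value then forces every step that would strictly raise $\val(i)$---in particular Steps~3--4---to be a no-op. With Steps~3--4 neutralized, every remaining step that acts strictly lowers $\cost(i)$, so cost is now genuinely nonincreasing, and the fixed-point equality on cost forces Steps~5--7 to be no-ops as well.
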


Next, we observe that if the set of distinct utilities of $i$, $\{u_{ij} \ | \ j \in G \}$, is a singleton, then any bundle satisfying $\size(i) = 1$ and $\cost(i) \leq 1$ is optimal. Henceforth we will assume that this set has at least two elements.

\begin{lemma}
	\label{lem.bundle}
	If $(p, x)$ is a fixed point of $F$, as defined in Algorithms \ref{alg.p} and \ref{alg.x}, then $x_i$ is an optimal bundle for $i$ at prices $p$.
\end{lemma}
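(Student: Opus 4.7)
The plan is to invoke Corollary~\ref{cor.potential}, which guarantees that at a fixed point $(p,x)$ no step of $F_i$ alters $x_i$, and then read off from the inactivity of each step a structural constraint on $x_i$. Combining these constraints, the goal is to exhibit explicit dual multipliers $\alpha_i \ge 0$ and $\mu_i$ placing $x_i$ into one of the four optimal-bundle types A--D identified in Section~\ref{sec:optimal}.

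The first split is on the value of $\cost(i)$, which is at most $1$ by Lemma~\ref{lem.two}. If $\cost(i) < 1$, then the quantity $t$ computed in Step~3 of $F_i$ is strictly positive, so for Step~4 to leave $x_i$ unchanged we must have $x_{ik} = 0$ for every $k \notin G_i^*$. Therefore $B_i \subseteq G_i^*$, and setting $\alpha_i = 0$, $\mu_i = \max_j u_{ij}$ makes $x_i$ a Type~A optimal bundle.

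Suppose instead $\cost(i) = 1$. The structural information now comes from Steps~5, 6, and~7. Their inactivity translates, respectively, into: for every $j \in B_i$ and every $k$ with $u_{ik} \ge u_{ij}$, $p_k \ge p_j$; for every bundle good $k$ and every triple $u_{ij} < u_{ik} < u_{il}$, the point $(u_{ik},p_k)$ lies weakly below the chord joining $(u_{ij},p_j)$ and $(u_{il},p_l)$ in the $(u,p)$-plane; and for every $j,l \in B_i$ and every $k$ of strictly intermediate utility, the point $(u_{ik},p_k)$ lies weakly above that same chord. Applied to three bundle goods with strictly increasing utilities, Steps~6 and~7 together force collinearity, so all bundle goods lie on a common line $L$ in the $(u,p)$-plane.

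It remains to pick $\alpha_i, \mu_i$ and verify $u_{iz} \le \alpha_i p_z + \mu_i$ for every non-bundle good $z$. If all bundle goods share a common utility $u^*$, then Step~5 applied symmetrically forces them to share a common price, which must equal $1$ since $\cost(i) = \size(i) = 1$; taking $\alpha_i = 0, \mu_i = u^*$ yields a Type~B certificate when no good has utility exceeding $u^*$, and taking $\alpha_i > 0$ sufficiently large with $\mu_i = u^* - \alpha_i$ yields Type~C. Otherwise $L$ has strictly positive slope, and the plan is to define $\alpha_i$ as its reciprocal and $\mu_i$ as its $u$-intercept so that $\alpha_i p_j + \mu_i = u_{ij}$ on every $j \in B_i$, placing $x_i$ into Type~D. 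The main obstacle is then the dual inequality for a non-bundle good $z$ whose utility falls outside the range of bundle utilities; the plan is to apply Step~6 to the triple formed by $z$ together with two bundle goods on $L$ straddling the endpoint nearest to $z$, and use the collinearity of those two bundle goods to rewrite the Step~6 inequality as precisely $\alpha_i p_z + \mu_i \ge u_{iz}$, effectively extrapolating $L$ past the bundle-utility range. Non-bundle goods of strictly intermediate utility are handled directly by Step~7, and non-bundle goods whose utility coincides with a bundle utility by Step~5, completing the verification.
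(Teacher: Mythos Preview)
Your geometric approach is genuinely different from the paper's and in several respects cleaner. The paper argues through five cases, in each one either verifying optimality directly or assuming $x_i$ is suboptimal and exhibiting a specific step of $F_i$ that would fire; you instead read the inactivity of Steps~5--7 as affine constraints in the $(u,p)$-plane, show the bundle points are collinear on a line $L$, and manufacture the dual certificate $(\alpha_i,\mu_i)$ directly from $L$. This makes the role of Steps~6 and~7 as ``below/above the chord'' constraints transparent and collapses the Type~D verification into a single extrapolation argument.

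Two related gaps remain, and both are repaired by the same missing ingredient: the zero-priced good $z$ guaranteed by Lemma~\ref{lem.two}. First, in the common-utility case with $u^*<\max_j u_{ij}$, taking $\alpha_i$ ``sufficiently large'' is \emph{wrong}: every good $m$ with $p_m<1$ imposes the upper bound $\alpha_i \le (u^*-u_{im})/(1-p_m)$, and $z$ is always such a good. The correct claim is that the admissible $\alpha_i$ form a nonempty interval; to prove nonemptiness you need Step~6 on triples $(m,k,l)$ with $k\in B_i$ (compatibility of the upper and lower bounds) together with Step~6 on $(z,k,l)$ to force $p_l>1$ whenever $u_{il}>u^*$ (so that the lower endpoint is finite). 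The paper's Case~3 carries out exactly this argument with the zero-priced good. Second, your assertion that $L$ has strictly positive slope is unjustified: Step~5 yields only $p_j\le p_k$ for bundle goods with $u_{ij}<u_{ik}$. If the slope were zero, all bundle goods would share price~$1$ (from $\cost(i)=\size(i)=1$); noting that Step~5 then forces $u_{iz}<u_{ij}$, Step~6 on the triple $(z,j,k)$ with $j\in B_i$ gives $(u_{ik}-u_{ij})\cdot 1 \le (u_{ij}-u_{iz})\cdot 0$, a contradiction. Once these two uses of the zero-priced good are supplied, your argument goes through.
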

		
\begin{proof}
We will consider the following exhaustive list of cases. Each contradiction is based on applying Corollary \ref{cor.potential}. We will assume that $\alpha_i$ and $\mu_i$ are the optimal variables of the dual to $i$'s optimal bundle LP and that $u = \max_j \{u_{ij} \}$.

{\bf Case 1:} Assume that $\cost(i) < 1$. If $B_i \not \subseteq G_i^*$, then Steps 3 and 4 will kick in, contradicting the fact that $(p, x)$ is a fixed point. Therefore $B_i \subseteq G_i^*$.  Clearly, $u$ is the maximum utility that $i$ can derive from a bundle satisfying $\size(i) = 1$ and $\cost(i) \leq 1$. Therefore, $x_i$ is an optimal bundle for $i$. Since $\cost(i) < 1$, by complementarity $\alpha_i = 0$ and hence $x_i$ is a Type A optimal bundle. 

		{\em Henceforth, we will assume that $\cost(i) = 1$.}
		
{\bf Case 2:} Assume that $i$ derives the same utility from all goods $j \in B_i$ and $B_i \subseteq G_i^*$. As in the previous case, $x_i$ is an optimal bundle for $i$ and hence each good in $B_i$ is optimal. If $\exists \ j, k \in B_i$ such that $p_j \neq p_k$, then we get
\[ \alpha_i p_j = u_{ij} - \mu_i = u_{ik} - \mu_i = \alpha_i p_k, \]
implying $\alpha_i = 0$. In this case, $x_i$ is a Type B bundle. 

Otherwise, since $\cost(i) = 1$, $\size(i) = 1$ and all goods in $B_i$ have the same price, each good in $B_i$ has price 1. Now, $x_i$ is either a Type B or a Type C bundle depending on whether $\alpha_i = 0$ or $\alpha_i > 0$. However, the latter condition cannot be discerned from $x_i$; fortunately, it does not matter.
		
{\bf Case 3:} Assume that $i$ derives the same utility from all goods $j \in B_i$ and $B_i  \not  \subseteq G_i^*$. Let $k$ be a good in $B_i$ and let $z$ be a good having price 0. Each good in $B_i$ must be a minimum price good having utility $u_{ik}$, since otherwise Step 5 of $F_i$ will alter the bundle. Since $\cost(i) = 1$, $\size(i) = 1$ and all goods in $B_i$ have the same price, each good in $B_i$ has price 1. 
		
		Let $l$ be a good such that $u_{il} > u_{ik}$; observe that any good in $G_i^*$ is such a good. We will prove that $p_l > 1 = p_k$. Clearly $u_{iz} < u_{ik}$, since otherwise Step 5 will kick in a change the bundle. Hence we have $u_{iz} < u_{ik} < u_{il}$. However, since Step 6 did not kick in, $(u_{il}-u_{ik})(p_k -p_z) \leq (u_{ik}-u_{iz})(p_l -p_k)$. Since $(u_{il}-u_{ik})(p_k -p_z) > 0$, we get that $(p_l -p_k) >0$. Therefore $p_l > p_k = 1$. Hence we can conclude that the optimal bundle for $i$ at prices $p$ is not a Type A or Type B bundle.

		Next, assume for the sake of contradiction that $x_i$ is not an optimal bundle for $i$ at prices $p$; in particular, this entails that the optimal bundle for $i$ is not Type C. Therefore, $i$'s optimal bundle must be Type D and $k$ is a suboptimal good. As argued in Section \ref{sec:optimal}, an optimal Type D bundle must contain a good of price $< 1$ and a good of price $> 1$; let $j$ and $l$ be such goods, respectively. Clearly $u_{iz} < u_{ik} < u_{il}$. Then we have,
		\[ \alpha_i p_j = u_{ij} - \mu_i, \ \ \ \ \alpha_i p_k > u_{ik} - \mu_i \ \ \ \mbox{and} \ \ \  \alpha_i p_l = u_{il} - \mu_i \]
		Subtracting the first from the second and the second from the third we get
		\[ \alpha_i (p_k - p_j) > (u_{ik} - u_{ij}) \ \ \ \  \mbox{and} \ \ \ \  \alpha_i (p_l - p_k) < (u_{il} - u_{ik}) \]
		This gives
		\[ (u_{il}-u_{ik})(p_k -p_j) - (u_{ik}-u_{ij})(p_l -p_k) > 0. \]
		Therefore, Step 6 should kick in, leading to a contradiction. Hence $x_i$ is a Type C optimal bundle.

	{\em Henceforth, we will assume that $\cost(i) = 1$ and $\exists \ s, t \in B_i$ with $u_{is} < u_{it}$.}
		
{\bf Case 4:} Assume that the set $\{u_{ij} \ | \ j \in G \}$ has exactly two elements. Clearly, these utilities must be $u_{is}$ and $u_{it}$. Now, $s$ must be the zero-priced good, since otherwise Step 5 will kick in. Since $\cost(i) = 1$ and $\size(i) = 1$, $p_t > 1$. Again since Step 5 didn't kick in, $s$ and $t$ must be the cheapest goods having utilities $u_{is}$ and $u_{it}$. Therefore, $x_i$ is a Type D optimal bundle.
			  
{\bf Case 5:} Assume that the set $\{u_{ij} \ | \ j \in G \}$ has three or more elements. Since $\size(i) = 1$ and $\cost(i) = 1$, $\exists \ t \in B_i, \ s.t. \ p_t > 1$. Now, any good having utility $u$ must have price $> 1$, since otherwise Step 5 will alter the bundle. Therefore, $x_i$ cannot be a Type A or Type B bundle. Therefore, $\alpha_i > 0$. 

Suppose that $x_i$ is not an optimal bundle. Then there are two cases: that the optimal bundle is Type C or Type D. In the first case, let $k \in G$ be an optimal good; $p_k = 1$. Let $j, l \in B_i$ with $p_j < 1 < p_l$ and at least one of $j$ or $l$ is suboptimal. Clearly, $u_{ij} < u_{ik} < u_{il}$, otherwise Step 5 will kick in. Therefore we have 
		\[ \alpha_i p_j \geq u_{ij} - \mu_i, \ \ \ \ \alpha_i p_k = u_{ik} - \mu_i \ \ \ \mbox{and} \ \ \  \alpha_i p_l \geq u_{il} - \mu_i , \]
with at least one of the inequalities being strict. Therefore, 
\[ (u_{ik}-u_{ij})(p_l -p_k) > (u_{il}-u_{ik})(p_k -p_j) ,\]
and Step 7 should kick in, leading to a contradiction. Hence $x_i$ is a Type C optimal bundle.

Next suppose the optimal bundle is Type D. There are two cases. First, $\exists \ k \in B_i$ such that $k$ is a suboptimal good for $i$ and there are optimal goods $j$ and $l$ with $u_{ij} < u_{ik} < u_{il}$. Then we have
		\[ \alpha_i p_j = u_{ij} - \mu_i, \ \ \ \ \alpha_i p_k > u_{ik} - \mu_i \ \ \ \mbox{and} \ \ \  \alpha_i p_l = u_{il} - \mu_i \]
	As before we get
				\[ (u_{il}-u_{ik})(p_k -p_j) - (u_{ik}-u_{ij})(p_l -p_k) > 0. \]
		Therefore, Step 6 should kick in, leading to a contradiction. 
		
Second, that there is no such good $j \in B_i$. Let $v$ and $w$ be optimal goods with the smallest and largest utilities for $i$. Then all suboptimal goods in $B_i$ have either less utility than $u_{iv}$ or more utility than $u_{iw}$. Suppose there are both types of goods, say $j$ and $l$, respectively. Then Step 7 should kick in with the triple $j, v, l$. Else there is only one type, say $j$ with $u_j < u_v$. Then $\exists \ l \in B_i$ with $p_l > 1$. Now, Step 7 should kick in with the triple $j, v, l$. In the remaining case, $\exists \ j, l \in B_i$ with $p_j < 1$ and $u_{il} > u_{iw}$. Now, Step 7 should kick in with the triple $j, w, l$.   

The contradictions give us that $x_i$ does not contain a suboptimal good and is hence a Type D optimal bundle.
\end{proof}

Lemmas \ref{lem.two} and \ref{lem.bundle} give:

\begin{theorem}
	\label{thm.FIXP}
	The Hylland-Zeckhauser scheme is in FIXP.
\end{theorem}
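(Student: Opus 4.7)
The plan is to invoke the characterization of FIXP membership stated in Section \ref{sec.FIXP}: exhibit a closed, convex domain $D$ together with a continuous function $F \colon D \to D$ specified by a polynomial-size straight-line program over $\{+, -, \ast, /, \min, \max\}$, such that every fixed point of $F$ encodes a Hylland-Zeckhauser equilibrium. The domain and function have already been set up: $D = D_p \times D_x$ with $D_p = [0,n]^n$ a box and $D_x$ a product of $n$ simplices (one per agent), so $D$ is closed, compact, and convex; the update maps $F_p$ and $F_i$ are given by Algorithms \ref{alg.p} and \ref{alg.x}.

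The first task is bookkeeping: verify that $F$ is a bona fide FIXP function. This means checking that $F$ is built from the allowed algebraic operations (by inspection), that no step divides by zero (the denominator $1 + r \sum_k (1-p_k)_+$ in Step 2 of $F_i$ is always at least $1$, and the fractions in Steps 6 and 7 have strictly positive denominators $u_{il} - u_{ij}$ by construction), that the description has polynomial length (the loops run over single goods, pairs, and triples of goods, giving $O(n^4)$ operations in total), and that $F$ maps $D$ into $D$. The last point amounts to verifying that each branch of $F_i$ preserves $\sum_j x_{ij} = 1$ and $x_{ij} \ge 0$; this is routine since every update is either a mass-conserving swap (Steps 4--7) or an explicit convex combination (Step 2), and that $F_p$ leaves $p \in [0,n]^n$ and reestablishes a zero minimum (Steps 2--3).

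The substantive content is then packaged in the two preceding lemmas. Lemma \ref{lem.two} shows that at any fixed point we have market clearing ($\sum_i x_{ij} = 1$ for every good $j$), budget feasibility ($\cost(i) \le 1$ for every agent $i$), and the existence of a zero-priced good (so that Lemma \ref{lem.scale}'s normalization is consistent). Lemma \ref{lem.bundle}, via a case analysis over the four types A--D of optimal bundles from Section \ref{sec:optimal}, then shows that each $x_i$ is itself an optimal bundle at prices $p$. Taken together, these yield exactly the four conditions defining a Hylland-Zeckhauser equilibrium from Section \ref{sec.prob}, so the theorem follows.

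The step I expect to be the main obstacle, and which the two lemmas already subsume, is designing $F_i$ so that every non-optimal bundle is genuinely displaced at a fixed point: Steps 1--2 must push down overspending; Steps 3--4 must exploit cost slack by shifting mass toward $G_i^*$; Step 5 must equalize prices among goods of equal utility; and Steps 6 and 7 must be calibrated with precisely the coefficients $(u_{il}-u_{ik})/(u_{il}-u_{ij})$ and $(u_{ik}-u_{ij})/(u_{il}-u_{ij})$ so that a utility-preserving three-good swap strictly decreases $\cost(i)$ exactly when the dual optimality inequalities fail. Lemma \ref{lem.potential} provides the crucial sanity check that these steps cannot cancel one another at a putative fixed point (using a monotonicity argument in $\val(i)$ and $\cost(i)$ as a potential), so that the fixed-point condition really does force all the marginal equalities needed to conclude $x_i$ is optimal via Corollary \ref{cor.potential}. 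With the lemmas in hand, the theorem is then immediate.
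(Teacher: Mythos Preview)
Your proposal is correct and follows essentially the same approach as the paper: the paper's proof of Theorem~\ref{thm.FIXP} is precisely the one-line invocation of Lemmas~\ref{lem.two} and~\ref{lem.bundle}, with the domain $D = D_p \times D_x$ and the straight-line programs $F_p, F_i$ (Algorithms~\ref{alg.p} and~\ref{alg.x}) already set up beforehand. You are in fact somewhat more explicit than the paper in spelling out the bookkeeping checks (no division by zero, polynomial size, $F$ maps $D$ into $D$), which the paper leaves implicit.
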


	\section{Discussion}
\label{sec.discussion}

The main problem remaining is to determine if the HZ scheme is FIXP-hard. Another obvious open problem is to obtain efficient algorithms for computing approximate equilibria, suitably defined. Beyond this, generalizations and variants of the HZ scheme deserve attention, most importantly to two-sided matching markets \cite{Echenique2019constrained}.  

Encouraged by success on the unit case, we considered its generalization to the case of $\{0, {1 \over 2}, 1\}$ utilities. However, resolving whether this case always has a rational equilibrium is quite non-trivial and we leave it as an open problem. Furthermore, it will not be surprising if even this case is intractable; resolving this is a challenging open problem.

	\section{Acknowledgements}
\label{sec.ack}

We wish to thank Federico Echenique, Jugal Garg, Tung Mai and Thorben Trobst for valuable discussions, and Richard Zeckhauser for providing us with the Appendix to his paper \cite{hylland}. 
	
	\bibliographystyle{alpha}
	\bibliography{refs}
\end{document}